\newtheorem{lemma}{Lemma}
\newtheorem{prop}{Proposition}
\newtheorem{theorem}{Theorem}
\newtheorem{cor}{Corollary}
\newtheorem{definition}{Definition}
\renewcommand{\bar}{\overline}
\renewcommand{\hat}{\widehat}
\DeclareMathOperator*{\argmax}{arg\,max}
\newcommand{\reals}{\mathbb{R}}
\newcommand{\ubar}{\bar{u}}
\newcommand{\M}{\mathcal{M}}
\newcommand{\V}{\mathcal{V}}
\newcommand{\A}{\mathcal{A}}
\newcommand{\D}{\mathcal{D}}
\newcommand{\Ss}{\mathcal{S}}
\newcommand{\thetahat}{\hat{\theta}}
\newcommand{\rhohat}{\hat{\rho}}
\newcommand{\muhat}{\hat{\mu}}
\newcommand{\mhat}{\hat{m}}
\newcommand{\I}{\mathcal{I}}
\newcommand{\ur}{w}
\newcommand{\up}{u}
\newcommand{\vhat}{\hat{v}}
\newcommand{\VT}{V^\Theta}
\newcommand{\QM}{Q^\mathcal{M}}
\newcommand{\QV}{Q^\mathcal{V}}
\newcommand{\QT}{Q^\Theta}
\newcommand{\QVij}{Q^\mathcal{V}_{\vhat,v}}
\newcommand{\QTij}{Q^\Theta_{\thetahat,\theta}}
\newcommand{\squishlist}{
   \begin{list}{$\bullet$}
    { \setlength{\itemsep}{0pt}      \setlength{\parsep}{3pt}
      \setlength{\topsep}{3pt}       \setlength{\partopsep}{0pt}
      \setlength{\leftmargin}{1.5em} \setlength{\labelwidth}{1em}
      \setlength{\labelsep}{0.5em} } }
\newcommand{\squishend}{  \end{list}  }
\DeclareMathOperator{\E}{\mathbb{E}}
\newcommand{\kibitz}[2]{\ifnum\Comments=1\textcolor{#1}{#2}\fi}
\begin{document}
\title{A Persuasive Approach to Combating Misinformation}
\author{        
        Safwan Hossain$^{*}$ \\ 
        Harvard University \\ 
        \texttt{shossain@g.harvard.edu}
        \and
        Andjela Mladenovic\thanks{Equal Contribution} \\ 
        UdeM and Mila\\ 
        \texttt{andjela.mladenovic@mila.quebec} 
        \and
        Yiling Chen \\ 
        Harvard University \\ 
        \texttt{yiling@seas.harvard.edu} 
        \and 
        Gauthier Gidel \\ 
        UdeM and Mila\\
        \texttt{gauthier.gidel@mila.quebec}
}

\date{}

\newcommand{\hugh}[1]{\textcolor{red}{\emph {\textbf{ Hugh:}} {#1}}}

\maketitle

\begin{abstract}
    Bayesian Persuasion is proposed as a tool for social media platforms to combat the spread of misinformation. Since platforms can use machine learning to predict the popularity and misinformation features of to-be-shared posts, and users are largely motivated to share popular content, platforms can strategically signal this informational advantage to change user beliefs and persuade them not to share misinformation. We characterize the optimal signaling scheme with imperfect predictions as a linear program and give sufficient and necessary conditions on the classifier to ensure optimal platform utility is non-decreasing and continuous. Next, this interaction is considered under a performative model, wherein platform intervention affects the user's future behaviour. The convergence and stability of optimal signaling under this performative process are fully characterized. Lastly, we experimentally validate that our approach significantly reduces misinformation in both the single round and performative setting and discuss the broader scope of using information design to combat misinformation. 
\end{abstract}

\maketitle

\section{Introduction}
Spreading misinformation has been one of the major critiques of social media platforms. The structure of these platforms rewards users for sharing content to gain popularity (e.g. number of likes and future re-shares), often irrespective of its veracity. This has prompted debates on how platforms should police their content. The most common approach is using fact-checking to tag and censor untruthful content. However, such approaches can hardly keep up with the speed and scale of today's content generation and the validity of content may not be a binary true or false. Censorship moreover leads to thorny debates around the platform regulating freedom of speech. While most Americans support taking steps to restrict misinformation, half of Americans in 2021 agreed that ``freedom of information should be prioritized over ... restricting false information online''~\citep{PewResearch}.  

We present information design as a viable approach, either as an alternative or complement, for addressing misinformation on platforms. Our approach leverages the information asymmetry between a platform and its users. The platform strategically reveals some information that users care about, who act in self-interest given the information. When properly designed, this strategic revelation can lead to harmful/poor content being shared less frequently. This approach acknowledges the different incentive structures for the user and platform and does not require the platform to police already shared posts; instead, the platform disincentivizes the sharing of misinformed content by the user.

Specifically, we model the platform-user interaction under the Bayesian Persuasion (BP) framework ~\cite{kamenica2011bayesian}. A post has a two-dimensional hidden state: its popularity if shared and its degree of misinformation. While a user has some prior belief over these states, she naturally does not know the true popularity of her to-be-shared content.
She may also be unaware of her post's level of misinformation as it is often introduced unintentionally and driven by social-psychological factors such as a sense of belonging~\cite{Wardle2020}, confirmation bias~\cite{Ecker2022}, and habitual sharing~\cite{Ceylan2023}. Moreover, the user may also be indifferent to the misinformation state and care only about the popularity their post achieves if shared. While the user may be indifferent or unaware of misinformation, the platform's utility for user action (sharing or not sharing) depends on the true realization of both states --- the post's popularity and degree of misinformation. This misalignment of the user's and the platform's utility poses a challenge for the platform: the platform does not want the user to share misinformed posts, but the user derives high utility by sharing popular content, which may not be truthful. The platform, however, possesses an informational advantage due to its vast troves of historical user and content data. So while the platform, like the user, does not know with certainty the true state of the post, it can leverage this to build classification models to predict (with certain error rates) the post's popularity and degree of misinformation. Based on the predictions, the platform can strategically reveal some stochastic information about the post's state, hoping to alter the user's belief about the post. The user, updating their belief based on the revealed information, can decide to share or not share the post to maximize her expected utility. The interaction of the platform and the user forms a Bayesian Stackelberg game, with the platform (leader) choosing an information revelation (signaling) scheme first, with the user (follower) deciding on an optimal action for herself based on the received information. The goal of the platform is to choose a signaling scheme to optimize the platform's expected utility at equilibrium and, by doing so, reduce misinformed content on the platform while maintaining engagement. The platform must be cognizant of the long-term dynamic of such interventions and its implicit effect on user behaviour over time. 


\paragraph{\textbf{Our contributions:}}
We propose an information design approach for social media platforms to address the spread of misinformed content. Platforms predict the properties of a user's to-be-shared post and strategically reveal this to users to improve the quality of shared content. This \emph{noisy persuasion} setting is formally defined in section \ref{sec:Model}. The lack of perfect information leads to an effective reduction of signaling power. Section \ref{section:prelim} discusses this alongside other preliminaries that generalize known results for the standard BP setting. The exact effect of the platform's classification accuracy on its optimal signaling scheme and the resulting utility is then discussed in section \ref{section:optimal_persuasion}. Specifically, we formulate this as a linear program and provide sufficient and necessary conditions on the classifier to ensure the optimal platform utility is monotone and continuous. In section \ref{section:performative}, the platform-user interaction is viewed through a performative lens, wherein signaling affects the user's sharing behaviour and correspondingly, their future beliefs about their content. We give a complete characterization, proving the stability and convergence of this performative process. Our findings are experimentally validated in section \ref{section:experiments}, with technical and conceptual extensions for tackling misinformation using information design discussed in section \ref{section:discussion}.


\section{Related Works}
This work proposes a soft approach, not involving censorship or tagging, for combating online misinformation, an approach advocated in the literature \citep{Howe2023, Jackson2022, Pennycook2022}. \citet{Howe2023} proposed to cap the depth (how many times messages can be relayed) or width (how many people a message can be shared with) of a social network to improve information fidelity. \citet{Jackson2022} observed that allowing people to only share posts that they have indicated are true, what they termed self-censoring, reduces the spread of misinformation. \citet{Pennycook2022} experimentally concluded that interventions that shifted users' attention toward the concept of accuracy could help reduce online misinformation sharing. These works, together with ours, all focus on reducing the sharing of misinformation. \citet{yang2023designing} and \citet{Candogan2020} respectively studied reducing the creation and the consumption of inaccurate information through signaling. Instead of considering an intervention, \citet{acemoglu2021model} modeled the propagation of misinformation as a game and analyzed its equilibrium, whereas \citet{Candogan2020} focuses on optimization with respect to externality effects. While spiritually similar, both assume platforms have perfect knowledge and do not consider performative effects. 

The seminal work on Bayesian persuasion \cite{kamenica2011bayesian} has led to many follow-up studies: the computational complexity of the sender's optimization problem \cite{dughmi2016algorithmic}, the impact of restricting signaling schemes and exogenous information distortions on the informativeness and welfare of equilibrium \citet{kosenko2021noisy, tsakas2021noisy}, and robust equilibrium concepts when sender has ambiguity over external information that receiver may have \cite{dworczak2022preparing}. Information design has also been used for studying price discrimination \cite{Bergemann2015}, multiplayer games \cite{Chenghan2022}, and revenue management \cite{Drakopoulos2021}. \citet{BergemannMorris2019}, \citet{Kamenica2019} and \citet{Candogan-bookchapter} offer comprehensive reviews on work in this area. Our work proposes information design as a tool for addressing misinformation on social media under a learned observation model. The insights within this model can also be considered spiritually similar to works on information ordering in economics literature \citep[Chapter 12]{bergin2005microeconomic}. 

Our dynamic setting in section \ref{section:performative} is inspired by the literature on performative prediction \cite{perdomo2020performative, mendler2020stochastic, mofakhami2023performative}. This models the phenomenon that when decision-making is influenced by predictions, the decision will affect future predictions. In our setting, it is not the predictions but rather the signaling process that leads to this dynamic. Our performative model here parallels the stateful one proposed by \citet{brown2022performative} and naturally captures the notion that both the past distribution and the user’s present sharing decisions (influenced by signaling) will affect content distribution on the platform. The platform should be cognizant of such long-term distribution changes when trying to influence user decisions. From a technical perspective, our results here are novel and not captured by existing results in the literature.  

\section{Model}\label{sec:Model}
Consider the interaction between a social media \emph{platform}, and a \emph{user}. Without any platform intervention, the user lacks additional information when they draft posts for submission; thus, they take their default action to share. Our model considers a platform predicting features of this draft and committing to revealing or \emph{signaling} this information to the user according to a randomized scheme. Signaling affects user's belief about their post's content, and thus their action to share or not. We now precisely define this, observing that this single-user model is without loss of generality since the platform could interact in such a way with multiple users.

\paragraph{\textbf{States and Predictions:}} Let $\M = \{1, \dots, m_{max}\}$ and $\V = \{1, \dots, v_{max}\}$ denote the possible misinformation and validation/popularity states respectively. 
A post drafted by a user has some true joint feature state $\theta = (m, v)$ drawn from a prior distribution $\mu \sim \Delta^{|\Theta|}$, where $\Theta = \M \times \V$ and $\Delta^k$ denotes the $k$ simplex. The prior encapsulates the distribution of user's shared posts. Both platform and users know this distribution since it is simply composed of past statistics of this user's content. However, the user does not observe the true state $\theta$ for their drafted content - i.e., they do not know the true popularity/misinformation of any drafted post a priori. The platform, however, can leverage their data and scale to predict these states of the draft content, with predictions denoted by $\thetahat = (\mhat, \vhat)$. Unlike the canonical BP settings, we do not assume these predictions to be perfect. We capture the inaccuracy of the prediction models used in our \emph{noisy persuasion} problem as follows:
\begin{definition}
    The platform's validation and misinformation classifier uncertainty is captured by the multi-class confusion matrices $\QV \in [0,1]^{|\V| \times |\V|}$ and $\QM \in [0,1]^{|\M| \times |\M|}$. Let $\QT = \QM \otimes \QV$ denote the combined confusion matrix, with $\otimes$ representing the Kronecker product. An element at index $(\thetahat, \theta)$ of $\QT$, denoted by $\QTij$, records $P(\thetahat| \theta)$.
\end{definition}
We assume the predictors $\mhat$ and $\vhat$ to be conditionally independent given $m, v$ and at least as good as chance - i.e. $P(\mhat=x|m=x) \geq \frac{1}{m_{max}}$ and same for $\QV$. 



\paragraph{\textbf{Actions and Utilities:}} Upon drafting content, the user can choose between one of two actions: $\A = \{0,1\}$. The action $0$ corresponds to the user \emph{not sharing} the content, and 1 corresponds to them \emph{sharing}. Both platform and user utility depend on this action and the underlying state of the content. We formally define them below:
\begin{definition}\label{definition:utility}
    We denote the user utility function as $\ur: \A \times \M \times \V \rightarrow \reals$ and the platform utility function as $\up: \A \times \M \times \V \rightarrow \reals$. We assume both utilities are bounded.
\end{definition}
Similar to the standard persuasion settings, we assume the platform to know the user's utility. This allows them to anticipate the user's best action and design a scheme accordingly. For users, a special utility structure is when they care only about the validation their posts receive and may be indifferent to or disagree with the platform's characterization of misinformation. Lastly, note that minimal restrictions are placed on platform utility and thus they are free to choose this to balance revenue/engagement and veracity as desired.


\paragraph{\textbf{Signaling Scheme:}} The platform maintains a set of \emph{signals} $\Ss$ and reveals a \emph{signaling scheme} before the user drafts any content. This is simply a set of conditional distributions, denoted by $\pi(s|\mhat, \vhat)$, which specifies the probability of sending signal $s$ when the platform predicts the draft content to have state $\thetahat = (\mhat, \vhat)$. Since scheme $\pi$ must be committed to a priori, the platform goal is to design $\pi$ to maximize their expected ex-ante utility, formally defined as:
\begin{definition}\label{definition:ex_ante_utility}
    The platform's \emph{ex-ante utility for a signaling scheme} $\pi(s|\thetahat)$ is $\sum_{s}{P(s)\E_{\rho^s}[\up(a^*, \theta)]}$, where $\rho^s(\theta) = P(\theta|s)$ is the posterior distribution induced by signal $s$ and scheme $\pi$, and $a^* = \argmax_{a}\E_{\rho^s}{[\ur(\theta, a)]}$ is the optimal receiver action for that posterior belief.
    \vspace{-0.5em}
\end{definition}
We now summarize the platform-user interaction for an \emph{instance} $\I = (\up, \ur, \mu)$ as follows:
\squishlist
    \item Platform reveals a signaling scheme $\pi(s|\thetahat)$ 
    \item User drafts content with unknown state $\theta = (m,v) \sim \mu$
    \item Platform uses learning models with joint confusion matrix $\QT$ to obtain prediction $\thetahat = (\mhat, \vhat)$ for this post, and then samples signal $s \sim \pi(s|\thetahat)$.
    \item User observes the signal, computes their posterior belief, and takes their optimal action $a^*$.
    \item User attains utility $\ur(a^*, \theta)$ and platform attains utility $\up(a^*, \theta)$.
\squishend
Game theoretically, this interaction outlines a Stakelberg game with the platform and user taking on the leader and follower roles respectively. The signaling scheme maximizing the platform's ex-ante utility given the user best-responds (definition \ref{definition:ex_ante_utility}) is the Stakelberg equilibrium strategy. A key thrust of our work is understanding how properties of the prediction accuracy (i.e. confusion matrix $\QT$) affect this optimal signaling scheme and the resulting platform utility. 

\paragraph{\textbf{Performative Model:}}
Without persuasion, the user takes their utility maximizing action based on their prior: $a^* = \argmax_{a}\E_{\mu_0}{\ur(a, \theta)}$, which we naturally assume to be ``share''. As such, the distribution of content shared over time matches the original prior. Applying persuasion affects this, however, and the user's decision to share now depends on the signaling scheme; thus, the distribution of a user's shared posts changes over time due to this intervention. We model this through a performative angle. At round $t$, $\mu_t$ represents the distribution of the user's currently published content. The platform deploys a signaling scheme $\pi_t(s|\thetahat)$, and the user observes recommendations from this whenever she drafts content. Note that draft posts that were persuaded to not be shared are absent from the platform, and older content loses relevance. As such, we model the content distribution for round $t+1$ as a convex combination of the present prior $\mu_t$ and the content that was shared this round, $P_t(\theta|a^*=1)$. This is formally presented in Section \ref{section:performative}.

\section{Preliminaries}\label{section:prelim}
We begin by observing that the noisy setting restricts the platform's signaling power compared to the standard setting. We then establish two characterizations, one on the signal space at optimal signaling and the other on optimal sender utility, which parallel the characterizations in standard Bayesian persuasion. We then give a didactic example that synthesizes these observations and shows that persuasion improves platform utility and reduces misinformation.


\subsection{Noisy Persuasion is Less Powerful}
Compared with standard BP where the sender observes the realization of $\theta$ perfectly and signals based on $\theta$, our setting is noisy as the platform only learns and signals based on $\thetahat$. In this noisy setting, the set of \emph{effective signaling} schemes - i.e. equivalent signaling schemes that are based on $\theta$ - the platform can use is a subset of those in standard noiseless BP settings. Indeed, for a given prior $\mu$ and confusion matrix $\QT$, a noisy signaling scheme $\pi(s|\thetahat)$ is equivalent to a signaling scheme $\Tilde{\pi}(s|\theta) = \sum_{\thetahat}\pi(s|\thetahat)\QT_{\thetahat, \theta}$ in standard BP since both induce the same posterior beliefs over $\theta$.

While any signaling scheme over noisy observations can be transformed into one over exact observations, the converse is not true. That is, there exists schemes $\tilde{\pi}(s|\theta)$ that cannot be expressed in terms of $\pi(s|\thetahat)$. To see this intuitively, when the platform observes the state $\theta$ perfectly, it can signal $\tilde{\pi}(s=\theta|\theta) =1$ to fully reveal the state to the user.
However, if the platform itself does not observe $\theta$ perfectly, it can never induce such a certain belief in the user. In other words, noisy persuasion (when $\QT \ne \I$, since $\I$ corresponds to perfect predictions/standard BP) has a smaller set of effective signaling schemes at its disposal. While the comparison to standard persuasion is intuitive, a more prescient question is how does the signaling space and crucially the platform's optimal expected utility change between two arbitrary confusion matrices $\QT_1$ and $\QT_2$? This is a more involved question which we answer in Theorem \ref{theorem:monotonicity} by giving a strict ordering for confusion matrices.

\subsection{Simplifying the Signaling Scheme}\label{sec:revelation}

The platform in general can use any set of signals. However, we next show in Proposition \ref{prop:revelation_principle} that only $|\A|$ signals are needed to attain its best possible ex-ante expected utility in noisy persuasion (hereinafter referred to as \emph{optimal platform utility}), just as in standard BP with perfect observations \cite{kamenica2011bayesian, dughmi2016algorithmic}. Since $|\Ss| = |\A|$ suffices, signals can, without loss of generality, be interpreted as recommending a specific action, providing significant operational simplicity (proofs of this section are in Appendix \ref{Appendix:A}).  

\begin{prop}\label{prop:revelation_principle}
    For instance $\I = (\up, \ur, \mu)$ and joint confusion matrix $\QT$, let $u_\I^*(\QT)$ represent the optimal platform utility achievable with an arbitrary number of signals. Then it is also possible for the platform to achieve $u_\I^*(\QT)$ utility using exactly $|\A|$ signals (i.e. $|\Ss| = |\A|$). 
\end{prop}


\subsection{Geometry of Noisy Persuasion}\label{section:geometry}
For any scheme, the posterior distributions induced by the signal realization $s$, denoted by $\rho^s$, must always satisfy $\sum_{s}{P(s)\rho^s} = \mu$, a condition termed \emph{Bayes Plausibility}. \citet{kamenica2011bayesian} show that optimal signaling can be interpreted as inducing a platform-advantageous set of Bayes plausible beliefs. Formally, they construct a mapping from belief to expected sender utility and show that the optimal sender utility is equivalent to evaluating the concave closure of this function at the prior. We now show that this observation can be generalized to our setting with predicted states, a valuable insight for forthcoming results.

\begin{definition}
    Let $\rhohat$ denote a belief over predicted states $\thetahat$, and $\rho$ a belief over true states $\theta$. Then for instance $\I$, define $\bar{\up}(\rhohat): \Delta^{|\Theta|} \rightarrow \reals$ as mapping from $\rhohat$ to the platform expected utility (w.r.t. to the corresponding true belief $\rho$) for the optimal user action at that belief: $\E_{\rho}[\up(a^*(\rho), \theta)]$. Let $co(\rhohat)$ denote the convex hull of the graph of $\bar{u}(\rhohat)$, and $cl(\rhohat) = \sup\{\bar{\up}(\rhohat) | (\rhohat, z) \in co(\rhohat)\}$ its concave closure.
\end{definition}

To interpret this, for a belief over predicted states $\rhohat$, the corresponding belief over true states $\rho$ can be computed through a linear map $\VT$ (see Lemma \ref{lemma:observed_to_true}). Then $a^*(\rhohat)$ denotes the optimal user action for the corresponding true belief $\rho$. With $a^*(\rhohat)$, one can compute the platform's expected utility $\bar{\up}(\rhohat)$ over the corresponding $\rho(\theta)$. $co(\rhohat)$ denotes the convex hull of all such $(\rhohat, \bar{\up}(\rhohat))$ points, with the concave closure $cl(\rhohat)$ representing the boundary of this convex hull. Lastly, Bayes Plausibility can be re-stated as $\sum_{s}{P(s)\rhohat(s)} = \muhat_0$, where $\muhat$ is the corresponding belief over predicted states for prior $\mu$. Proposition \ref{theorem:concave_closure} now relates the optimal platform utility to the concave closure, generalizing the result of \citet{kamenica2011bayesian} to the noisy setting.


\begin{prop}\label{theorem:concave_closure}
    The platform utility achieved by optimal signaling on instance $\I$ is equal to $cl(\muhat)$, where $\muhat = \sum_{\theta}{\mu(\theta)\QTij}$ and $\mu(\theta)$ is the prior.
\end{prop}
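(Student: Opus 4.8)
The plan is to generalize the concave-closure characterization of \citet{kamenica2011bayesian} to the space of beliefs over \emph{predicted} states. The first step is to set up a dictionary between signaling schemes and Bayes-plausible distributions over posteriors. For any scheme $\pi(s\mid\thetahat)$, each realization $s$ occurs with probability $P(s)$ and induces a posterior $\rhohat^s$ over predicted states; a short total-probability computation (in the spirit of Lemma~\ref{lemma:noisy_to_true_scheme}) gives $\rhohat^s(\thetahat)=\pi(s\mid\thetahat)\,\muhat(\thetahat)/P(s)$, and averaging against $P(s)$ yields the Bayes-plausibility identity $\sum_s P(s)\,\rhohat^s=\muhat$. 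Passing to the corresponding true posteriors via Lemma~\ref{lemma:observed_to_true} and invoking the definition of $\bar{\up}$, the ex-ante utility of Definition~\ref{definition:ex_ante_utility} rewrites as $\sum_s P(s)\,\bar{\up}(\rhohat^s)$. Thus selecting a scheme is exactly selecting a finite family of posteriors $\{\rhohat^s\}$ with weights $\{P(s)\}$ whose barycenter is $\muhat$, and the objective is the induced mixture of $\bar{\up}$ values.

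With this dictionary the upper bound $u_\I^*(\QT)\le cl(\muhat)$ is immediate. For any scheme, the attained value $\sum_s P(s)\,\bar{\up}(\rhohat^s)$ is a convex combination of points $(\rhohat^s,\bar{\up}(\rhohat^s))$ on the graph of $\bar{\up}$, and since the weights and beliefs average to $(\muhat,\,\cdot\,)$, the point $(\muhat,\ \sum_s P(s)\bar{\up}(\rhohat^s))$ lies in the convex hull of the graph. As $cl(\muhat)$ is by definition the supremum of heights $z$ with $(\muhat,z)$ in this hull, the bound follows.

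The substantive direction is achievability. I would invoke Carath\'eodory's theorem to express the boundary point $(\muhat, cl(\muhat))$ as a finite convex combination $\sum_i \lambda_i\,(\rhohat_i,\bar{\up}(\rhohat_i))$ with $\sum_i \lambda_i\rhohat_i=\muhat$ and $\sum_i\lambda_i=1$, which also bounds the number of signals required. I then define the candidate scheme by $\pi(s_i\mid\thetahat)=\lambda_i\,\rhohat_i(\thetahat)/\muhat(\thetahat)$. The key verification is that this is a valid conditional distribution: non-negativity is clear, and $\sum_i \pi(s_i\mid\thetahat)=\big(\sum_i\lambda_i\rhohat_i(\thetahat)\big)/\muhat(\thetahat)=1$ precisely because the barycenter constraint forces $\sum_i\lambda_i\rhohat_i=\muhat$. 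A one-line back-substitution then shows $P(s_i)=\lambda_i$ and $P(\thetahat\mid s_i)=\rhohat_i(\thetahat)$, so the scheme reproduces each target posterior exactly and attains $\sum_i\lambda_i\,\bar{\up}(\rhohat_i)=cl(\muhat)$.

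I expect the main obstacle to be making this achievability step airtight rather than the inequality, and two points need care. First, the beliefs $\rhohat_i$ in the Carath\'eodory decomposition must lie in the effective domain of $\bar{\up}$, namely the predicted-state beliefs whose image under $\VT$ (Lemma~\ref{lemma:observed_to_true}) is a genuine distribution over true states; since $\muhat=\QT\mu$ lies in this convex set and the decomposition may be taken within the graph over it, feasibility is preserved, but this domain restriction must be stated explicitly so that the concave closure is not inflated by infeasible beliefs. Second, one must confirm receiver consistency: because $\bar{\up}(\rhohat_i)$ already evaluates platform utility at the user's best response to $\rhohat_i$, and the constructed scheme induces exactly $\rhohat_i$, no re-optimization of the user is triggered. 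This is where the earlier observation that signals are action recommendations (Lemma~\ref{lemma:revelation_principle}) keeps the argument self-consistent and lets one compress the $|\Theta|+1$ signals from Carath\'eodory down to $|\A|$.
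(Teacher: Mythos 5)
Your proposal follows the same route as the paper's own proof---Bayes plausibility of the predicted-state posteriors, the convex-hull upper bound, and achievability of $cl(\muhat)$---and on the achievability direction you are in fact more rigorous than the paper, which merely asserts that every convex decomposition of $\muhat$ arises from some scheme; your Carath\'eodory decomposition, the explicit construction $\pi(s_i|\thetahat)=\lambda_i\rhohat_i(\thetahat)/\muhat(\thetahat)$, and the verification $P(s_i)=\lambda_i$, $P(\thetahat|s_i)=\rhohat_i(\thetahat)$ are exactly the right way to fill that hole.

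However, the accounting identity on which both directions of your argument rest---that the ex-ante utility rewrites as $\sum_s P(s)\,\bar{\up}(\rhohat^s)$ ``by passing to the corresponding true posteriors via Lemma~\ref{lemma:observed_to_true}''---fails as written. The correspondence $\rho=\VT\rhohat$ of Lemma~\ref{lemma:observed_to_true} does not relate the signal-conditional posteriors: in general $P(\theta|s)\neq\sum_{\thetahat}\VT_{\theta,\thetahat}P(\thetahat|s)$. Take $|\Theta|=2$, $\mu=(\tfrac12,\tfrac12)$, $\QT=\bigl(\begin{smallmatrix}0.9 & 0.1\\ 0.1 & 0.9\end{smallmatrix}\bigr)$, and the scheme that reveals $\thetahat$ (signal $1$ iff $\thetahat=1$). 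Then $\rhohat^1=(1,0)$ and the user's actual posterior is $P(\theta|s{=}1)=(0.9,0.1)$, but $\VT\rhohat^1=(1.125,-0.125)$ is not even a distribution, so $\bar{\up}(\rhohat^1)$ as defined is not the platform's conditional payoff. The correct dictionary comes from the Markov chain $\theta\to\thetahat\to s$: since $s$ and $\theta$ are independent given $\thetahat$, one has $P(\theta|s)=\sum_{\thetahat}P(\theta|\thetahat)P(\thetahat|s)$, where $P(\theta|\thetahat)=\mu(\theta)\QT_{\thetahat,\theta}/\muhat(\thetahat)$ is the Bayes reversal of the classifier at the prior. If $\bar{\up}$ is defined through this (still linear, but prior-dependent) map rather than through $\VT$, every step of your proof goes through on the whole simplex, and your effective-domain worry evaporates, because every $\rhohat\in\Delta^{|\Theta|}$ then maps to a genuine mixture of the reverse conditionals.

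Relatedly, the patch you propose---restricting the closure to beliefs whose $\VT$-image is a distribution, i.e.\ to $\QT\Delta^{|\Theta|}$---does not repair the argument; it overshoots. Decompositions of $\muhat$ inside $\QT\Delta^{|\Theta|}$, pulled back by $\VT$, are exactly arbitrary Bayes-plausible decompositions of $\mu$, so the restricted closure equals the \emph{perfect-observation} persuasion value, which noisy schemes generally cannot attain; moreover, the scheme your construction produces from such a decomposition gives the user actual posteriors equal to the Bayes reversals of the $\rhohat_i$, not $\VT\rhohat_i$, so the computed value misstates its real payoff. To be fair, the defective dictionary is inherited from the paper's own definition of $\bar{\up}$ and its proof makes the identical move, so your write-up is no less sound than the original; but since this identity is the crux of the proposition, the argument is only airtight once $\bar{\up}$ is defined via the reverse conditionals.
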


\begin{cor}\label{corollary:user_never_decrease}
    The expected user utility can never decrease due to any signaling scheme. 
\end{cor}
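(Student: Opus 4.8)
The plan is to exploit the convexity of the user's value function together with Bayes plausibility, mirroring the concave-closure argument already used for the platform. The benchmark against which ``decrease'' is measured is the no-signaling outcome: absent any informative signal the user acts on the noisy prior $\muhat$ and obtains $\bar{\ur}(\muhat)$. The goal is therefore to show that for every Bayes-plausible collection of induced posteriors $\{\rhohat^s\}$, the expected user utility under signaling, $\sum_s P(s)\,\bar{\ur}(\rhohat^s)$, is at least $\bar{\ur}(\muhat)$.

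First I would record that each $w_a(\rhohat)$, the user's expected utility for the fixed action $a$ at belief $\rhohat$, is an affine function of $\rhohat$: by Lemma \ref{lemma:observed_to_true} the map $\rhohat \mapsto \rho$ is the linear map $\rho(\theta) = \sum_{\thetahat}\rhohat(\thetahat)\VT_{\theta,\thetahat}$, and $w_a(\rhohat) = \sum_\theta \rho(\theta)\,\ur(a,v)$ is a linear functional of $\rho$, hence of $\rhohat$. Consequently $\bar{\ur}(\rhohat) = \max\{w_0(\rhohat), w_1(\rhohat)\}$ is a pointwise maximum of two affine functions, and is therefore convex on the simplex $\Delta^{|\Theta|}$; the sender-favorable tie-breaking rule affects only which action is named at the indifference plane, not the maximized value, so convexity of $\bar{\ur}$ is unaffected.

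Next I would invoke Bayes plausibility in the predicted-state space, as established in the proof of Proposition \ref{theorem:concave_closure}: any signaling scheme induces posteriors satisfying $\muhat = \sum_s P(s)\,\rhohat^s$. Applying Jensen's inequality to the convex function $\bar{\ur}$ then yields
\[
\sum_s P(s)\,\bar{\ur}(\rhohat^s) \;\ge\; \bar{\ur}\!\Big(\textstyle\sum_s P(s)\,\rhohat^s\Big) \;=\; \bar{\ur}(\muhat),
\]
which is precisely the statement that expected user utility weakly increases relative to the no-signaling benchmark.

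The argument is short, so there is no deep obstacle; the only points needing care are bookkeeping rather than difficulty. The first is confirming that the ``no signaling'' benchmark is genuinely $\bar{\ur}(\muhat)$: when the user's signal is uninformative her posterior coincides with the prior $\muhat$, so she best-responds with $\max\{w_0(\muhat),w_1(\muhat)\}=\bar{\ur}(\muhat)$. The second is ensuring the convexity step is applied to the value rather than to the $\argmax$ --- I would explicitly note that the possible discontinuities of the optimal \emph{action} across the indifference plane are irrelevant, since the upper envelope of finitely many affine functions is convex (indeed continuous) everywhere, which is exactly what Jensen requires.
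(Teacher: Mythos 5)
Your proposal is correct and takes essentially the same route as the paper: the paper likewise notes that $\bar{\ur}(\rhohat) = \max(w_0(\rhohat), w_1(\rhohat))$ is convex as a pointwise maximum of affine functions of $\rhohat$ (affinity following from the linear map of Lemma~\ref{lemma:observed_to_true}), and concludes from Bayes plausibility that expected user utility under any signaling scheme weakly exceeds the value at the prior. Your write-up simply makes explicit the Jensen step and the no-signaling benchmark $\bar{\ur}(\muhat)$, which the paper leaves implicit in its text preceding the corollary.
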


Now consider a function $w_a(\rhohat)$ denoting the user's expected utility for taking action $a$ at belief $\rhohat$. This is a linear function since the mapping from $\rhohat(\thetahat)$ to $\rho(\theta)$ is linear (Lemma \ref{lemma:noisy_to_true_scheme}), and expectation is linear. By analogously defining $\bar{\ur}(\rhohat) = \max(w_0(\rhohat), w_1(\rhohat))$ for the user (their expected utility at a belief due to optimal action), we note this is a convex function. Due to Bayes plausibility, the expected user utility after signaling is $\sum_{s}{P(s) \bar{\ur}(\rhohat_s)}$, which by the convexity of $\bar{\ur}$ can never be lower than the utility at the prior (no signaling), leading to Corollary \ref{corollary:user_never_decrease}. For the platform, while their utility never decreases, strict improvement is equivalent to $\ubar(\muhat) < cl(\muhat)$. Intuitively, if the platform utility $\ubar(\muhat)$ is aligned with the user along some direction in the belief space, then it can always signal to reveal more information along that direction and strictly improve. Platform and user alignment in the belief space are generally satisfied as both parties would prefer to share popular and true posts and not share false unpopular ones. Indeed, alignment is even easier if the user utility is also cognizant of reducing misinformation, and not purely popularity-driven. 

\subsection{Example}
\begin{figure}[ht]
    \centering
    \includegraphics[width=0.6\linewidth]{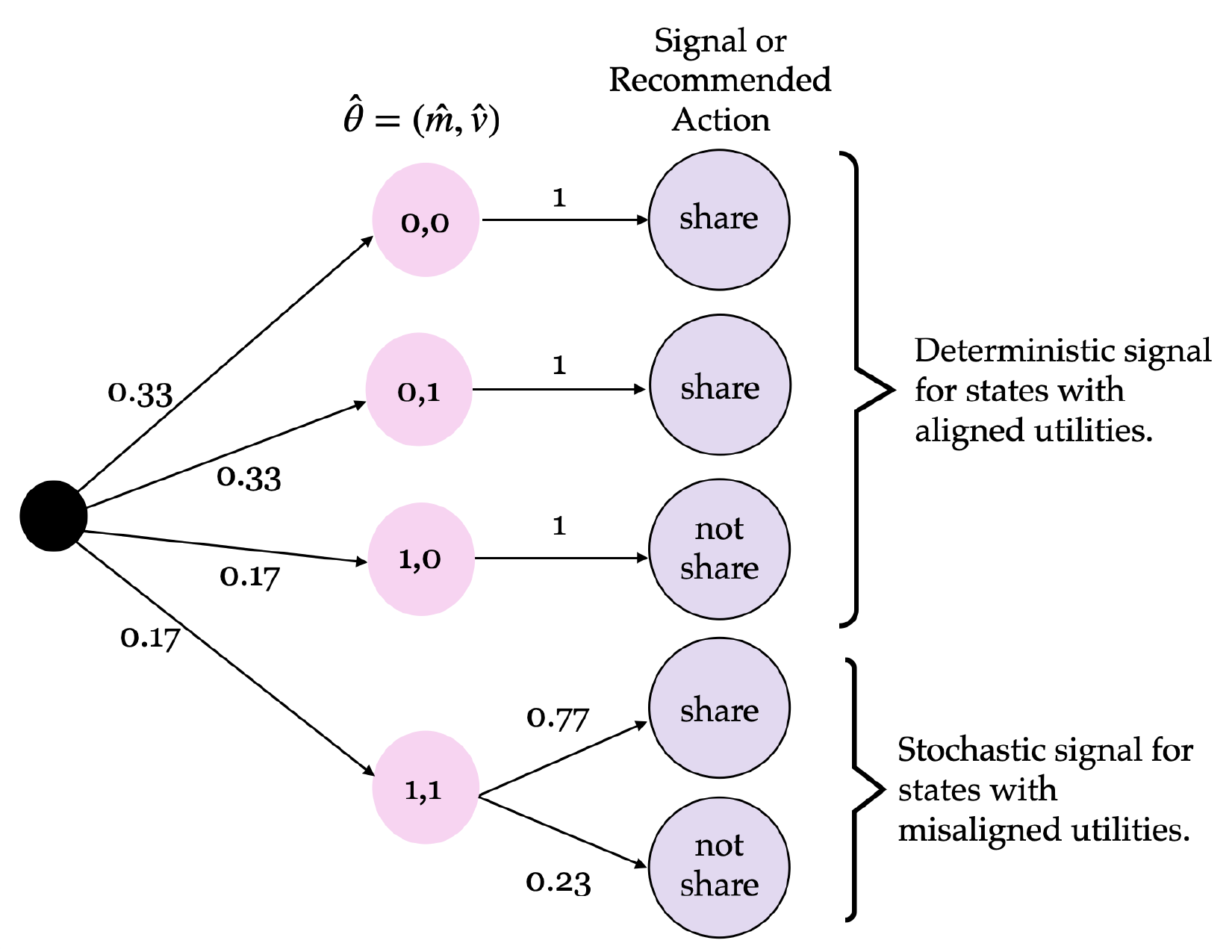}
    \caption{Noisy prior and signaling for example instance. Edges represent probability}
    \label{fig:example_instance}
\end{figure}

Consider an instance with binary validation and misinformation states ($0$ is not true/not popular and $1$ is false/popular), with $1$ and $0$ denoting user action ``share'' and ``not share''. We consider a popularity-driven user who is indifferent to misinformation, which as discussed is the harder setting. Let the prior $\mu$ and platform and user utilities $u$ and $w$ be:  

\begin{table}[h]
    \centering
    \vspace{-4pt}
    \begin{tabular}[\textwidth]{cccccc}
        \toprule
       \small$\theta = (m,v)$  &\small $\mu(\theta)$ &\small $u(\theta, 0)$ &\small $u(\theta, 1)$ &\small $w(\theta, 0)$ &\small $w(\theta, 1)$\\
       \midrule
        \small$(0,0)$ &\small 0.35 &\small 0 &\small 1 & \small0 &\small -1\\
        \midrule
        \small$(0,1)$ &\small 0.35 & \small-1 &\small 2 &\small -2 &\small 1\\
        \midrule
        \small$(1,0)$ &\small 0.15 &\small 0 &\small -1 & \small0 &\small -1\\
        \midrule
        \small$(1,1)$ & \small0.15 &\small 0 & \small-3 & \small-2 &\small 1\\ 
        \bottomrule
    \end{tabular}
    \label{tab:my_label}
\end{table}

Without signaling, the optimal user action at the prior is to share, whose outcomes are given in the first row of Table \ref{tab:example_instance}. Suppose the platform has a $90\%$ accurate classifier for both $m$ and $v$. The prior over-predicted states $\thetahat$ are presented in Fig. \ref{fig:example_instance} (first set of edges). Consider the signaling scheme specified here, with the platform fully revealing information in aligned states and strategically obscuring when misaligned. When the user behaves optimally under this signaling, their expected utility is unchanged, but the platform's expected utility and the fraction of shared content that is misinformation drastically improves (Table \ref{tab:example_instance}).

\begin{table}[htb]
    \centering
    \begin{tabular}{lccc}
    \toprule
    {} &     \parbox{1.5cm}{\raggedright {Platform utility}} &  \parbox{3cm}{{\% of shared post that is misinfo.}} & \parbox{0.7cm}{\raggedright{User \\ utility}} \\
    \midrule
    \parbox{3.5cm}{{Before persuasion}}      &  {0.45} &  {30} &{0}\\  
    \midrule
     \parbox{3.5cm}{{After persuasion}}         &  {0.64}&  {17} & {0} \\
    \bottomrule
    \end{tabular}
    \caption{\small {Results before and after persuasion}}
    \label{tab:example_instance}
\end{table}

\section{Optimal Noisy Persuasion}\label{section:optimal_persuasion}
While the geometric perspective gives us insight into the properties of signaling, it does not offer a way to compute an optimal signaling scheme. In this section, we begin by providing a linear program (LP) that computes the optimal signaling scheme under inaccurate predictions. We then characterize how the resulting optimal platform utility is affected by the confusion matrix $\QT$. Specifically, we give an ordering of matrices $\QT$ as it relates to effective signaling space, provide necessary and sufficient conditions on $\QT$ for optimal platform utility to be non-decreasing, and show that this optimal utility is Lipschitz continuous in $\QT$. These results are not only structural but also of operational significance since classifier accuracy is something platforms can modify and improve, making it important to understand the dynamic. Proofs for this section are in Appendix \ref{Appendix:B}.

As shown in Section~\ref{sec:revelation}, it suffices to focus on signaling schemes where $\Ss = \mathcal{A}$, with each signal interpreted as an action recommendation. When the platform commits to a signaling scheme $\pi(s|\thetahat)$, its effective signaling scheme is $\tilde{\pi}(s|\theta)=\sum_{\thetahat}\pi(s|\thetahat)\QT_{\thetahat, \theta}$. 
Similar to \citet{dughmi2016algorithmic} who formulated an LP for optimal signaling scheme in standard BP, we formulate the following LP for solving the optimal signaling scheme in our noisy persuasion setting:      
 \begin{eqnarray}\label{LP:1}
  \textrm{max} & \sum_{a_i}^{|\A|}\sum_{\theta}{\up(a_i, \theta)\mu(\theta)\tilde{\pi}(s=a_i|\theta)}  &\\\label{LP:2}
  \textrm{s.t.} & \sum_{\theta}{\Delta \ur_{ij}(\theta)\mu(\theta)\tilde{\pi}(s=a_i|\theta)} \geq 0  & \forall a_i, a_j\\\label{LP:3}
                      & \Tilde{\pi}(s=a_i|\theta) = \sum_{\thetahat}\pi(s=a_i|\thetahat)\QT_{\thetahat, \theta} & \forall \, a_i, \theta\\\label{LP:4}
                      & \sum_{a_i}{\pi(s=a_i|\thetahat)} = 1& \forall \thetahat\\\label{LP:5}
                      & \pi(s=a_i|\thetahat) \geq 0 & \forall \, a_i, \thetahat 
  \end{eqnarray}
  where $\Delta \ur_{ij}(\theta) = \ur(a_i, \theta) - \ur(a_j, \theta)$.
(\ref{LP:2}) is the incentive compatibility constraint, which enforces that the recommended action has a higher expected utility under the induced posterior than any other action, making it optimal for the user. This phenomenon is often referred to as \emph{optimal signaling is persuasive}. Objective (\ref{LP:1}) then captures the platform's ex-ante expected utility when $\tilde{\pi}$ is the induced effective signaling scheme. (\ref{LP:1}) and (\ref{LP:2}) is the same as the LP for standard Bayesian persuasion. However, our LP for noisy persuasion requires additional constraints (\ref{LP:3}), (\ref{LP:4}), (\ref{LP:5}), which restrict $\tilde{\pi}$ to the set of effective signaling schemes that can be induced under confusion matrix $\QT$.

We now wish to understand how for a given instance $\I$, the quality of the platform classifier affects optimal achievable utility, $u^*_\I(\QT)$. While it is natural that a better classifier would lead to higher utility, there is no unique notion of ``better'' for a multi-class classifier. For example, entropy, recall, and F1 score are all different notions of classifier quality. We precisely settle this question for symmetric $\QT$ in Theorem~\ref{theorem:monotonicity} (Proofs for this section are in Appendix \ref{Appendix:B}) by leveraging our LP to prove that the optimal platform utility is non-decreasing exactly with respect to the set inclusion of the convex hull of the rows $[\QT]_{1:},\ldots,[\QT]_{|\Theta|:}$ of $\QT$. Importantly, our proof shows that if the rows of $\QT_1$ are contained within the convex hull of rows of $\QT_2$, \emph{any} effective signaling for $\QT_1$ can also be achieved under $\QT_2$. If this is not satisfied, then we show that an instance can always be constructed such that the \emph{optimal signaling} under $\QT_1$ corresponds to an effective signaling not possible under $\QT_2$, and vice versa. By taking the contrapositive, it also gives an ordering of confusion matrices as it relates to the effective signaling space, formally presented in Corollary~\ref{cor:ordering}.



\begin{theorem}\label{theorem:monotonicity}
    Given two symmetric confusion matrices $\QT_1$ and $\QT_2$ and any instance $\I$, the optimal sender utility, $\up^*_\I(\QT_2) \geq \up^*_\I(\QT_1)$, is non-decreasing if and only if $co([\QT_1]_{1:},\dots ,[\QT_1]_{|\Theta|:}) \subseteq co([\QT_2]_{1:},\dots ,[\QT_2]_{|\Theta|:})$, where $co([\QT_i]_{1:},\dots ,[\QT_i]_{|\Theta|:})$ is the convex hull of the rows of matrix $\QT_i$.
    
\end{theorem}

\begin{cor}\label{cor:ordering}
    Let $\Phi_{\I}(\QT)$ denote the set of all effective signaling schemes $\tilde{\pi}$ that noisy persuasion with confusion matrix $\QT$ can achieve. Then for two symmetric confusion matrices $\QT_1, \QT_2$, we have $\Phi_{\I}(\QT_1) \subseteq \Phi_{\I}(\QT_2)$ if and only if $co([\QT_1]_{1:},\dots ,[\QT_1]_{|\Theta|:}) \subseteq co([\QT_2]_{1:},\dots ,[\QT_2]_{|\Theta|:})$.
\end{cor}

Theorem \ref{theorem:monotonicity} is also operationally insightful since the conditions for optimal platform utility to monotonically increase can be easily parsed from the $\QT$ matrix, noting that convex-hull inclusion is easily verifiable. It also gives platforms a clear metric for comparing classifier quality for the noisy persuasion task. We next present the second key result of this section which shows that the change in optimal utility due to confusion matrix $\QT$ is Lipschitz continuous. This is advantageous operationally since making slight modifications to the underlying classifier will not drastically affect the optimal persuasion utility. The proof relies on the geometric insights developed in section \ref{section:prelim}. To sketch, we show that the concave closure function $cl(\rhohat)$ is Lipschitz in $\rhohat$, and that the change from $\QT_1$ to $\QT_2$ leads to a bounded vertical and lateral shift in the closure function due to this property, giving rise to Lipschitz continuity in $\QT$.

\begin{theorem}\label{theorem:continuity}
    For any $\I$, the maximum platform utility, $\QT \mapsto \up^*_{\I}(\QT)$, is Lipschitz continuous.
\end{theorem}


\section{Performative Perspectives}\label{section:performative}
We now consider the dynamics of noisy persuasion over time. Recall that without persuasion, users simply take the best action for their prior, which we naturally assume to be "share". As such, the distribution of content in the platform is unchanged from the starting prior. With persuasive signaling, we now stochastically induce different beliefs in the user, and their decision to share is affected accordingly. Over time, the content distribution this user forms their belief upon will skew toward the type of content they were recommended to share. Formally, the prior distribution between two-time steps, $\mu_t$ and $\mu_{t+1}$, is affected by the signaling employed by the platform.\footnote{The actual time between $t$ and $t+1$ is unimportant for our technical analysis and we leave it for practitioners at deployment.} Correspondingly, the platform's optimal signaling at $t+1$ will change from time $t$. In this section, we consider the utility function and predictions models fixed and look to model this interaction between optimal signaling and the resulting prior it leads to.

The nascent literature around performative prediction captures a similar tension in the classification setting \citep{perdomo2020performative, mendler2020stochastic, mofakhami2023performative}. Therein, an optimal classifier is deployed for a given distribution, which is then affected by the chosen classifier. While sharing several parallels, these works generally require the underlying optimization problem to be unconstrained and strongly convex, which while reasonable for classification, do not hold for our optimal signaling linear program. Further, they model the distribution update to be purely based on the optimization variable, where the past has no effect. Transition in our social media setting is not necessarily stateless since older posts still exist on the platform, albeit with diminished relevance. Inspired by \citet{brown2022performative}, we consider a stateful performative model where we interpolate between the earlier prior and the new distribution affected by signaling. We describe this below:
\begin{definition}\label{definition:performative_process}
    The performative persuasion process for an instance $\I = (u, w, \mu_0)$ with joint confusion matrix $\QT$ is defined as follows:
    \squishlist
        \item At each round $t$, with prior $\mu_t(\theta)$, the platform chooses an optimal signaling scheme $\pi_t^*(s|\thetahat)$, $s \in \{0, 1\}$.
        \item Since optimal signaling is persuasive, users follow the recommendation (to share or not to share).
        \item The next round's prior distribution is given by: $\mu_{t+1} = \lambda \mu_t + (1-\lambda)\rho_t(\theta|s=1)$, where $\rho_t(\theta|s=1)$ is the distribution of content that was shared this round, and $\lambda \in [0,1]$ a hyper-parameter.
    \squishend
\end{definition}

The key questions in this setting are if and how such a process converges. Parallel to this, another important notion in performative literature is \emph{stability}. Interpreted in our setting, this is a prior belief wherein optimal signaling does not change the underlying distribution. We precisely define both for our setting below: 

\begin{definition}
    The performative process converge to distribution $\mu^*$ if for any $\varepsilon > 0$, there exists a $T_c$ such that for $t > T_c$, $\mu_{t} - \mu^* < \varepsilon$. A distribution $\mu^s$ is denoted stable if $\mu_t = \mu^s$ implies $\mu_{t+1} = \mu^s$. 
\end{definition}

We first prove that in a stateful setting, $\lambda > 0$, the performative process always converges to the optimal posterior belief induced by the share signal (referred to as the share posterior) at round $0$ with prior $\mu_0$. This proof relies on the geometric insight developed in section \ref{section:geometry}. We leverage the fact that the performative process does not affect the concave closure function, and show that Bayes plausibility implies $\mu_t$ always lies between the first two optimal posteriors induced at the first round. All proofs for this section are in Appendix \ref{appendix:C}.

\begin{theorem}\label{theorem:performative_convergence}
    For $\lambda > 0$, the performative process always converges to the best optimal posterior induced by the share signal in the first round: $\rho^1_0(\theta) = \sum_{\thetahat}{\rhohat^1_0 \VT_{\theta, \thetahat}}$, which has utility $\bar{\up}(\rhohat^1_0)$.\footnote{$\rhohat^1_0$ is the posterior induced by signal $1$ (share) at round 0. which has utility $\bar{\up}(\rhohat^1_0)$}
\end{theorem}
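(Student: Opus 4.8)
The organizing idea is that the concave closure $cl(\cdot)$ from Section~\ref{section:geometry} is attached to the fixed data $(\up,\ur,\QT)$ and is \emph{independent of the prior}: the prior only chooses the point $\muhat_t$ at which $cl$ is evaluated (Proposition~\ref{theorem:concave_closure}). Hence $cl$ never moves across rounds, and the whole performative process is just the motion of a single evaluation point $\muhat_t$ on a fixed graph. The first step is to lift the update to noisy-prior coordinates: applying the linear map $\rho\mapsto\sum_\theta\rho(\theta)\QT_{\thetahat,\theta}$ (Lemma~\ref{lemma:observed_to_true}) to $\mu_{t+1}=\lambda\mu_t+(1-\lambda)\rho_t(\theta\mid s=1)$ yields $\muhat_{t+1}=\lambda\muhat_t+(1-\lambda)\rhohat^1_t$, where $\rhohat^1_t$ is the noisy share posterior at round $t$. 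So in these coordinates each round simply pulls $\muhat_t$ convexly toward the current share posterior.

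The heart of the argument is an invariance claim, proved by induction: for every $t$ the optimal scheme may be taken to induce the \emph{same} two posteriors $\rhohat^0_0,\rhohat^1_0$ as round $0$, and $\muhat_t$ stays on the segment $\ell_0:=[\rhohat^0_0,\rhohat^1_0]$. By Lemma~\ref{lemma:indifference}, at round $0$ this segment lies on the graph of $cl$ (not merely its endpoints), and Bayes plausibility places $\muhat_0$ on it. Assume $\muhat_t\in\ell_0$ with share posterior $\rhohat^1_0$. Then $\muhat_{t+1}=\lambda\muhat_t+(1-\lambda)\rhohat^1_0$ is a convex combination of two points of $\ell_0$, hence again on $\ell_0$ and strictly nearer $\rhohat^1_0$; in particular it never crosses a kink of the piecewise-linear closure. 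Since $\ell_0\subseteq\mathrm{graph}(cl)$ and $cl$ is unchanged, splitting $\muhat_{t+1}$ into the endpoints $\rhohat^0_0,\rhohat^1_0$ is an optimal scheme at round $t+1$, and the tie-breaking rule of Definition~\ref{definition:performative_process} (posteriors closest to the previous round) selects exactly it, giving $\rhohat^1_{t+1}=\rhohat^1_0$. This closes the induction.

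With the share posterior pinned at $\rhohat^1_0$, the recursion becomes the contraction $\muhat_{t+1}-\rhohat^1_0=\lambda(\muhat_t-\rhohat^1_0)$, so $\muhat_t-\rhohat^1_0=\lambda^t(\muhat_0-\rhohat^1_0)\to\bm{0}$ for $\lambda<1$. Mapping back through the bounded inverse $\VT=(\QT)^{-1}$ (Lemma~\ref{lemma:observed_to_true}, with $\|\VT\|\le M$) gives $\|\mu_t-\rho^1_0\|\le M\,\lambda^t\,\|\muhat_0-\rhohat^1_0\|\to 0$, i.e.\ $\mu_t\to\rho^1_0$, whose platform utility is $\bar{\up}(\rhohat^1_0)$ as claimed.

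I expect the invariance step to be the main obstacle. It requires (i) that $\ell_0$ genuinely lies on $cl$, so the endpoint decomposition remains optimal at \emph{every} point of the sub-segment $[\muhat_0,\rhohat^1_0]$ — this is precisely what Lemma~\ref{lemma:indifference} buys us, and without it one would have to re-solve the LP of Theorem~\ref{theorem:linear_program} afresh each round — and (ii) that the freedom in choosing among optimal decompositions is resolved in favor of keeping $\rhohat^1_0$, which is exactly the role of the tie-breaking rule. A secondary point to verify carefully is that $\muhat_t$ never leaves the single linear piece of the closure, which holds because it only travels monotonically along $\ell_0$ toward $\rhohat^1_0$.
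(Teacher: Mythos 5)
Your proposal is correct and follows essentially the same route as the paper: lift the update to noisy-prior coordinates, observe that the concave closure is prior-independent, show $\muhat_t$ stays on the fixed segment $[\rhohat^0_0,\rhohat^1_0]$ so that (with the tie-breaking rule) the round-$0$ posteriors remain optimal, and conclude via the contraction $\muhat_{t+1}-\rhohat^1_0=\lambda(\muhat_t-\rhohat^1_0)$. The one imprecision is attributing the claim that the whole segment lies on the graph of $cl$ to Lemma~\ref{lemma:indifference} alone --- that lemma only places the \emph{endpoints} on the graph, and the paper completes the step exactly as your sketch implicitly requires: since $cl$ is concave and piecewise linear, matching the chord at both endpoints and at the interior point $\muhat_0$ (via Bayes plausibility and Proposition~\ref{theorem:concave_closure}) forces $cl$ to coincide with the chord on the entire segment.
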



The convergence of this process has some nice properties. First, the platform can easily determine the optimal utility at convergence since it is simply the utility of the share posterior induced by optimal signaling in the first round. Thus, given an instance $\I$ and their classifiers, the platform can easily determine how beneficial long-term persuasion will be for a given user. Our proof in fact shows that the optimal platform utility at each prior in this performative process is monotonically increasing. This is a by-product of the geometry of the $\ubar(\rhohat)$ function and the fact that $\mu_t$ always lies between the optimal posteriors induced at round $0$. However, this result implies nothing about stability, since $\mu_t$ is always changing, albeit slightly for large $t$. This leads us to answer in Theorem \ref{theorem:performative_stability} the remaining technical questions: (1) what is the convergence behavior when $\lambda = 0$ and (2) does this process admit a stable point? This proof also relies on the geometric insights of optimal noisy persuasion. 

\begin{theorem}\label{theorem:performative_stability}
    If $\lambda = 0$, the performative process converges in $2(|\Theta| - 1)$ iterations to a stable point where the platform always recommends to share. 
\end{theorem}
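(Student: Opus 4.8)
The plan is to run the entire argument in predicted-state space, where the recursion derived in the proof of Theorem~\ref{theorem:performative_convergence} specializes, at $\lambda=0$, to the one-step map $\muhat_{t+1}=\rhohat^1_t$: next round's noisy prior is exactly this round's share posterior. Throughout, the belief-to-utility map $\bar{\up}(\rhohat)$ and its concave closure $cl(\rhohat)$ are fixed, since they depend only on the utilities and not on $\mu_t$. The first thing I would record is a value-neutrality invariant: for every $t\ge 1$ we have $cl(\muhat_t)=\bar{\up}(\muhat_t)$. Indeed, Lemma~\ref{lemma:indifference} gives $cl(\rhohat^1_0)=\bar{\up}(\rhohat^1_0)$ at the round-$0$ share posterior, and $\muhat_1=\rhohat^1_0$; the identity then propagates, because any two-point optimal split of a prior already lying on the upper surface must place both induced posteriors on that surface (a standard property of concave closures, via $\alpha\bar{\up}(\rhohat')+(1-\alpha)\bar{\up}(\rhohat'')\le cl(\muhat_t)$ with equality forcing $cl=\bar{\up}$ at the endpoints). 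Consequently, from round $1$ onward persuasion yields no strict gain over simply recommending the user's optimal action, so the only way the prior can keep moving is through the tie-breaking rule of Definition~\ref{definition:performative_process}, which selects, among the value-equivalent optimal schemes, the posteriors closest to the previous round's.

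With value-neutrality in hand, I would convert each move into a strict drop in face dimension. Because there are two actions, the graph of $cl$ is a piecewise-linear concave surface (Section~\ref{section:geometry}) and each optimal scheme writes $\muhat_t$ as a convex combination of two points of this surface at which $cl=\bar{\up}$; by the boundary/indifference-plane dichotomy of Section~\ref{section:geometry} the share branch $\rhohat^1_t$ sits on the indifference plane $H$ or on the simplex boundary. Since $\muhat_t=\rhohat^1_{t-1}$ already lies on such a face, a genuinely new split can only push $\rhohat^1_t$ onto a proper sub-face: first along the indifference polytope $H\cap\Delta^{|\Theta|}$, whose dimension is at most $|\Theta|-2$, and then, once the branch leaves $H$, down the nested faces of the simplex boundary, of dimensions $|\Theta|-2$ down to $0$. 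Walking strictly downward in each of these two phases costs at most $|\Theta|-1$ steps, giving the stated bound $2(|\Theta|-1)$. When the share posterior finally reaches a vertex of the surface, no non-trivial split of it remains, so the platform can only recommend share with probability one; as every drafted post is then shared, $\mu_{t+1}=\rho_t(\cdot\mid s=1)=\mu_t$, i.e.\ the point is stable and the recommendation is always ``share'' (the user's optimal action there, since it is a share posterior with ties broken in the platform's favour).

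The main obstacle is making the descent airtight and pinning down the exact constant. Three points need care: (i) showing that the tie-breaking rule actually drives the share posterior to an \emph{extreme} point of the current face, rather than leaving it in the relative interior or letting the selection cycle among value-equivalent optima; (ii) cleanly separating the indifference-plane phase from the boundary phase so the two length-$(|\Theta|-1)$ budgets are not double counted, and handling the transition where the branch first leaves $H$; and (iii) certifying that the terminal vertex admits \emph{no} further non-trivial optimal split, so that we obtain genuine stability ($\mu_t=\mu^s$ for all large $t$) rather than mere asymptotic convergence. For each of these I would lean on the piecewise-linear geometry of $cl$ from Section~\ref{section:geometry} together with the fact that support can only shrink under a convex-combination update, which makes the boundary-phase descent monotone.
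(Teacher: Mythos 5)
Your setup is sound and matches the paper's starting point: the $\lambda=0$ recursion $\muhat_{t+1}=\rhohat^1_t$, the fact that $\bar{\up}$ and $cl$ are fixed throughout, and the value-neutrality invariant $cl(\muhat_t)=\bar{\up}(\muhat_t)$ for $t\ge 1$ (which the paper never states explicitly, and which does propagate by the concavity argument you give). But the heart of the theorem is exactly the step you defer: why each round must push the share posterior onto a \emph{proper} sub-face rather than letting it sit in a relative interior or cycle among value-equivalent optima. Your items (i)--(iii) are not side conditions to be checked later; they \emph{are} the proof. The paper closes this by induction on $|\Theta|$, using two confinement facts your sketch never establishes. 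First, once the prior has a coordinate in $\{0,1\}$, every posterior of every subsequent scheme must agree on that coordinate (no convex combination involving a posterior that differs there can reproduce such a prior), so the whole future process is trapped in that sub-simplex --- this is the rigorous version of your ``support can only shrink,'' and it is what licenses recursing on a strictly smaller state space. Second, if the prior lies on the indifference plane $\D$ and one induced posterior lies on $\D$, the other must too (the affine constraint: prior and one endpoint on $\D$ force the second endpoint onto $\D$), so the process is trapped in $\D$; and on $\D$ the user is indifferent everywhere, so $\bar{\up}$ is a maximum of two linear functions, hence either concave (no strict improvement, stable under the tie-breaking convention) or posterior-driving to the simplex boundary. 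Each dimension drop therefore costs at most two rounds, and the induction delivers $2(|\Theta|-1)$.

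Two further problems. Your two-phase accounting (``first along $\D$, then, once the branch leaves it, down boundary faces'') mis-models the dynamics: a boundary face is itself a smaller simplex whose own indifference region is again part of $\D$, so the trajectory \emph{alternates} between indifference and boundary faces; the budget of $2(|\Theta|-1)$ works out only under the paper's interleaved induction (at most two rounds per dimension), not as two disjoint $(|\Theta|-1)$-step phases, and your concern (ii) cannot be patched while keeping that phase structure. Second, there is an internal tension in your narrative: value-neutrality combined with the tie-breaking rule you invoke (posteriors closest to the previous round's) would freeze the process at round $1$, since the closest value-equivalent scheme keeps the share posterior exactly at the prior --- so tie-breaking cannot simultaneously be what drives a $2(|\Theta|-1)$-step descent. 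The paper instead treats each round's re-optimization as free to select any optimal scheme, uses Lemma \ref{lemma:indifference} to force that scheme's posteriors onto $\D$ or the boundary, and then bounds how long such motion can persist; you need to commit to one of these readings, and under either one the descent has to be rebuilt along the paper's inductive lines.
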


We note that the stable point above is indeed stable for any $\lambda$. More broadly, while having exactly $\lambda = 0$ is unlikely, this result can be interpreted as follows: the convergence process in Theorem \ref{theorem:performative_convergence} gets infinitely close to some distribution (rate depends on $\lambda$) but never reaches it. Since, realistically, there is a finite resolution and practical agents have bounded rationality, it may be reasonable to assume $\mu_t$ to reach $\rho^1_0$. Such a phenomenon is inextricably linked to the $\lambda = 0$ case for which the process directly moves to this $\rho^1_0$. Thus, the two results can be seen from a hierarchical perspective: Theorem \ref{theorem:performative_stability} can be seen as charting the high-level behavior of the performative process until reaching a stable point while Theorem \ref{theorem:performative_convergence} provides evidence that such a phenomenon also occurs in the more realistic case where performativity is a smoother process where $\mu_t$ depends on the previous priors. Alternatively, we can see the $\lambda = 0$ process as changing the user's priors with persuasion until reaching a point wherein optimal signaling can no longer achieve any benefit. This leads to our last technical question: what are the conditions wherein the stable point has higher platform utility than the original prior $\mu_t$. We provide a sufficient theoretical condition, we experimentally observe the stable point to be largely beneficial for the platform in practice (Fig. \ref{fig:performative_persuasion}).

\begin{theorem}\label{theorem:performative_monotonic}
    For $c_n = \max_{\theta}{\up(0, \theta)}$, define a normalized platform utility $\up'(a, \theta) = \up(a, \theta) - c_n$.\footnote{Recall the 0 action is to \emph{not share}} Then the performative process with $\lambda = 0$ always converges to a stable point in a monotonically increasing fashion if the sender's normalized utility at $\mu_0$ is positive. 
\end{theorem}

\section{Experiments}\label{section:experiments}
We now experimentally validate our approach: specifically, while we provide detailed theoretical results on the platform utility under optimal signaling, it is instructive to see how this translates into reducing misinformation sharing. Due to a lack of public data, we create a synthetic dataset for the three components of a noisy persuasion instance: the prior distribution, platform utility, and user utility. Three possible states are considered for validation and misinformation respectively, with 0 representing unpopular/true, 2 representing popular/false, and 1 representing a  middle ground. We sample utilities uniformly under the following natural constraints: the platform has 0 utility for the ``not share'' action, a positive utility that is increasing in $v$ for $m=0$, a negative utility that is decreasing in $v$ for $m=2$, and $\forall v \,,\, \up(a=1, m=0, v) > \up(a=1, m=1, v) > \up(a=1, m=2, v)$. Note that for $m=1$, the platform utility could either increase or decrease with $v$, mirroring the possible trade-offs platforms could make between revenue and content quality when uncertain. The user is assumed to be purely popularity-driven with 0 utility for not sharing and increasing utility in $v$ (from negative for $v=0$ to positive for $v=2$) for sharing; as discussed in section \ref{section:geometry}, this is usually the harder setting for persuasion. For each instance, we vary the classification error between 0 to 0.4 (the error is equally divided amongst all the incorrect classes) and plot in Fig. \ref{fig:single_round_persuasion} the decrease in the percentage of shared posts that are misinformation ($m=2$) before and after persuasion.

\begin{figure*}[htb]
\flushleft
\begin{minipage}{0.44\linewidth}
\centering
\begin{tikzpicture}
  \node (img)  {\includegraphics[width=1.0\linewidth]{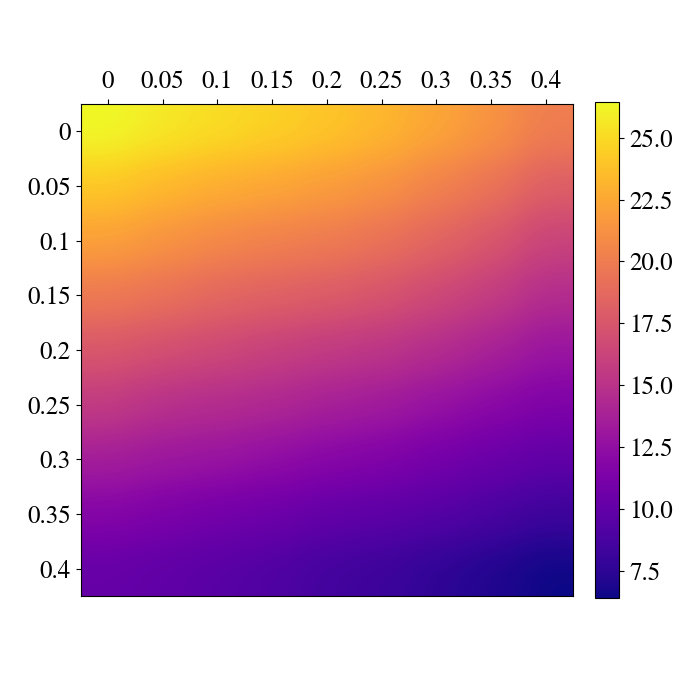}};
  \node[above=of img, node distance=0cm, yshift=-1.7cm,font=\color{black}] {V classifier error};
  \node[left=of img, node distance=0cm, rotate=90, anchor=center,xshift=0cm, yshift=-1cm,font=\color{black}] {M classifier error};
 \end{tikzpicture}
 \vspace{-40pt}
  \caption{Avg \% decrease in misinformation shared due to single application of noisy persuasion.}\label{fig:single_round_persuasion}
\end{minipage}%
\hspace{1.7cm}
\begin{minipage}{0.44\linewidth}
\begin{tikzpicture}
  \node (img)  {\includegraphics[width=1.0\linewidth]{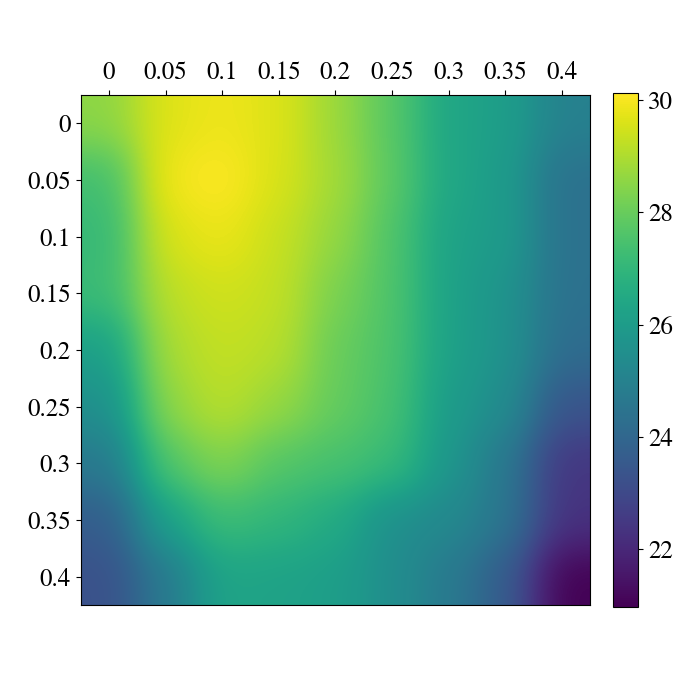}};
  \node[above=of img, node distance=0cm, yshift=-1.7cm,font=\color{black}] {V classifier error};
  \node[left=of img, node distance=0cm, rotate=90, anchor=center,yshift=-1cm,font=\color{black}] {M classifier error};
\end{tikzpicture}
\vspace{-40pt}
\hspace{20pt}\caption{Avg \% decrease in misinformation shared between prior and stable point}\label{fig:performative_persuasion}
\end{minipage}%
\vspace{-1em}
\end{figure*}


Even when both classifiers perform poorly, a $10\%$ misinformation reduction is achieved, increasing to around $20\%$ when the classifier errors are below $0.15$. We also note that the results are more sensitive to the accuracy of the misinformation classifier than the validation one. This is expected since the platform utility ordering essentially flips based on the $m$ state, making it crucial for platforms to capture the true misinformation level more accurately. 

Regarding the performative dynamic, note that Theorem~\ref{theorem:performative_convergence} implies that the prior at convergence for $\lambda \ne 0$ is the induced posterior corresponding to the user sharing $(a=1)$ at round $t=0$. Thus, the misinformation shared at the performative convergent belief is also captured by Fig.~\ref{fig:single_round_persuasion}. When $\lambda = 0$, Theorem~\ref{theorem:performative_stability} shows the process converges to a stable point and Fig.~\ref{fig:performative_persuasion} plots the average decrease in misinformation between the starting distribution and the stable distribution. While the results are less correlated with classifier accuracy, there is a substantial decrease in misinformation shared at the stable point, suggesting that the stable point is an improvement over the prior, and that long-term application of noisy persuasion is beneficial for the platform. Lastly, both plots have a $90\%$ confidence interval of less than $4\%$.  

\section{Discussion}\label{section:discussion}
This work takes a softer approach toward addressing misinformation on social media platforms by leveraging an information design framework based on Bayesian persuasion. Our setting, wherein underlying states are not perfectly observed but predicted, generalizes this classical framework to a noisy and realistic setting. We rigorously characterize how the prediction accuracy affects optimal signaling and platform utility, providing operationally useful results to platforms while noting that user utility can never decrease due to this. Further, the techniques used provide significant insights on noisy persuasion from a geometric and optimization perspective which may be of broader interest. We also consider the long-term implications of such an intervention and model the platform-user interactions from a performative angle. We rigorously characterize the convergence and stability properties of this process; these results illustrate that persuasion can have a long-term positive impact on content quality and veracity on social media platforms.

Our work leaves open a number of technical and conceptual questions. Providing necessary conditions that ensure increased platform utility at a stable point would complement Theorem \ref{theorem:performative_monotonic} and be an insightful result. As would augmenting our experiments with real user data and interactions to evaluate the real-world effectiveness of such an approach. Along a similar line, our model assumes a perfectly rational user with the sender not having any exogenous restrictions; generalizing this to consider a bounded rationality model \citep{jones1999bounded, de2022non} or designing robust signaling under exogenous restrictions \citep{dworczak2022preparing, kosenko2021noisy} would be an intriguing research direction. The robust persuasion model of \citet{dworczak2022preparing} may be especially relevant for the misinformation setting since it considers an exogenous third party also influencing the user and proposes optimal signaling for the worst case. 
Another interesting direction is how persuasion impacts influence propagation and network effects within platform \citep{barbieri2013topic, arieli2022herd}. Lastly, developing broader socio-technical guidelines around information design for online interactions is a prescient and necessary direction which we leave for future work.

\bibliography{bibliography}
\clearpage

\appendix
\onecolumn

\section{Appendix A}\label{Appendix:A}


\begin{lemma}\label{lemma:noisy_to_true_scheme}
    For a signaling scheme $\pi(s|\thetahat)$, the probability of observing a signal $s$ given true realization $\theta$ is given by $P(s|\theta) = \sum_{\thetahat}{\pi(s|\thetahat)\QTij}$. Further, $P(s|m,v) = \sum_{\mhat, \vhat}{\pi(s|\mhat, \vhat)\QM_{\mhat, m}\QV_{\vhat, v}}$
\end{lemma}
\begin{proof}
    The following is due to total probability law: $P(s|\theta) = \sum_{\thetahat}{P(s|\theta, \thetahat)P(\thetahat|\theta)}$. Note that given $\thetahat$, signal $s$ is conditionally independent of $\theta$ since signaling is directly specified by the former. We also note that the classification predictors $\mhat, \vhat$ are assumed to be conditionally independent given $m, v$, with $\QT$ being a Kronecker product. Thus, $P(s|m,v) = \sum_{\mhat, \vhat}{\pi(s|\mhat, \vhat)P(\mhat|m)P(\vhat|v)}$.
\end{proof}

\begin{lemma}\label{lemma:observed_to_true}
    Given belief over true states $\rho(\theta)$, the corresponding belief over predicted states is $\rhohat(\thetahat) = \sum_{\theta}{\rho(\theta)\QTij}$. Similarly, for belief $\rhohat(\thetahat)$, the corresponding belief over true states is: $\rho(\theta) = \sum_{\thetahat}{\rhohat(\thetahat)\VT_{\theta, \thetahat}}$, where the $\VT_{\theta, \thetahat} = \frac{\QTij \mu(\theta)}{\sum_{\theta'}{\QT_{\thetahat, \theta'}\mu(\theta')}}$.
\end{lemma}
\begin{proof}
     First, it is easy to see that $\rhohat(\thetahat) = \sum_{\theta}{P(\thetahat|\theta)\rho(\theta)} = \sum_{\theta}{\rho(\theta)\QTij}$. 
     For the other direction, note that $\rho(\theta) = \sum_{\thetahat}{\rhohat(\thetahat)P(\theta|\thetahat)}$. Next, by Bayes rule, it is evident that $P(\theta|\thetahat) = \frac{\QTij \mu(\theta)}{\sum_{\theta'}{\QT_{\thetahat, \theta'}\mu(\theta')}} \triangleq \VT_{\theta, \thetahat}$.
\end{proof}

\subsection*{Proof of Proposition \ref{prop:revelation_principle}}
\begin{proof}
    Let $\pi^*$ denote the optimal unrestricted signaling scheme with a total of $\ell$ signals. 
    We can state the posterior over true states $\theta = (m,v)$ for signal realization $s$ as: $P(\theta|s) = \frac{1}{P(s)} \mu(\theta)\sum_{\thetahat}{\pi^*(s|\thetahat)\QTij}$. For each $a$, let $S_a$ denote the set of signals whose induced posterior under $\pi^*$ leads to optimal action $a$. Next, consider a signaling scheme that directly recommends an action, satisfying $|\Ss| = |\A|$. Define this as follows: $\pi'(a|\thetahat) = \sum_{s \in S_a}{\pi^*(s|\thetahat)}$. Next, observe that utility at this optimal scheme, denoted by $u_\I^*(\QT) = \sum_{a}\sum_{s \in S_a}{P(s)}\sum_{\theta}{\up(a, \theta)P(\theta|s)} = \sum_{a, \theta}\sum_{s \in S_a}{\up(a, \theta)\mu(\theta)P(s|\theta)}$. \, We next use lemma \ref{lemma:noisy_to_true_scheme} and write this as equal to:
    \begin{gather*}
        \sum_{a, \theta, \thetahat}{\mu(\theta)\up(a, \theta)\QTij}\sum_{s \in S_a}{\pi^*(s|\thetahat)} = \sum_{a, \theta}\mu(\theta)\up(a, \theta)\sum_{\thetahat}{\QTij \pi'(a|\thetahat)}
    \end{gather*}
    We note that the inner summand (over $\thetahat$) is equivalent to $P(s=a|\theta)$ under the new action-recommending signaling scheme. Thus we can write the expected utility under the new signaling is: $\sum_{a}P(s=a)\sum_{\theta}{\up(a, m, v)P(\theta|a)} = u_\I^*(\QT)$ completing the proof.
\end{proof}

\subsection*{Proof of Proposition \ref{theorem:concave_closure}}
\begin{proof}
    Given a prior $\mu(\theta)$ over true states $\theta$, $\muhat$, which we call the noisy prior, can be interpreted as the corresponding belief over predicted states. The Bayes plausibility condition, which immediately follows from Bayes rule, implies that for any signaling scheme $\pi$, $\muhat(\thetahat) = \sum_{s}{P(s)\rhohat(\thetahat|s)}$, where $\rhohat(\thetahat|s)$ (also denoted by $\rhohat^s$) is the induced belief over predicted states upon receiving signal $s$. Thus, the expected utility of any signaling scheme must be in the set $\{\bar{\up}(\muhat) | (\hat{\mu}, \bar{\up}(\muhat)) \in co(\hat{P})\}$, since this includes all convex combinations of induced beliefs that equal the noisy prior, and their corresponding expected platform utility (which is simply the same convex combination of expected utilities at those beliefs). Thus $z^* = \sup\{\bar{\up}(\muhat) | (\hat{\mu}, \bar{\up}(\muhat)) \in co(\hat{P})\} = cl(\muhat)$ is the maximum utility achievable by any signaling scheme with an arbitrary number of signals. By proposition \ref{prop:revelation_principle} we know there exists a signaling scheme with $|S| = |A|$ that can also achieve this utility.
\end{proof}

Using the notation from section \ref{section:prelim}, recall $w_a(\rhohat)$ denotes the user's expected utility for taking action $a$ at belief $\rhohat$. Then for a belief $\rhohat$, if $w_0(\rhohat) = w_1(\rhohat)$, then $\rhohat$ represents the threshold where the receiver's optimal action changes. This threshold is a hyperplane since it is the intersection of two hyperplanes, and we call this the \emph{indifference plane} $\D$, since the user is indifferent to both actions at this point. \footnote{As standard in persuasion literature, we assume when the user is tied, it is broken in favour of the platform}. Since the $\bar{\up}(\rhohat)$ function is based on the optimal user action at belief $\rhohat$, $\bar{\up}(\rhohat)$ is possibly discontinuous over this indifference plane since this is where the optimal action for the user changes. We now show in Lemma \ref{lemma:indifference} that posteriors induced by a strictly optimal signaling scheme (strictly improves upon the platform utility at the prior belief) are inextricably linked to the indifference plane.

\begin{lemma}\label{lemma:indifference}
    The concave closure $cl(\rhohat)$ is continuous, piecewise linear, and possibly non-differentiable over the indifference plane. Further, the posteriors $\rhohat^s$ induced by strictly optimal signaling is either on the simplex boundary or the indifference plane, and $cl(\rhohat^s) = \bar{\up}(\rhohat^s)$.
\end{lemma}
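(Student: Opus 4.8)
The plan is to exploit the structure, noted just before the statement, that each $w_a(\rhohat)$ is affine in $\rhohat$ and that the indifference plane $I = \{\rhohat : w_0(\rhohat) = w_1(\rhohat)\}$ is a hyperplane splitting the simplex $\Delta^{|\Theta|}$ into two polytopes $R_0$ (where $a^* = 0$) and $R_1$ (where $a^* = 1$). The platform's expected utility for a \emph{fixed} action is likewise affine in $\rhohat$, since the map $\rhohat \mapsto \rho$ is linear (Lemma~\ref{lemma:observed_to_true}) and $\up$ enters only through an expectation. Hence $\bar{\up}$ agrees with an affine function $L_0$ on $R_0$ and with an affine function $L_1$ on $R_1$, with ties on $I$ broken in the platform's favour; so $\bar{\up}$ is piecewise affine with two pieces, its only discontinuity being a possible vertical jump across $I$.

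First I would settle the structural claims about $cl$. As $\bar{\up}$ is bounded and piecewise affine over a bounded polytope, its concave closure is the upper boundary of the convex hull of the two flat polytopal graph pieces $G_0 = \{(\rhohat, L_0(\rhohat)) : \rhohat \in R_0\}$ and $G_1 = \{(\rhohat, L_1(\rhohat)) : \rhohat \in R_1\}$; this upper boundary is a concave piecewise-affine function, and concave piecewise-affine functions on a polytope are continuous, which yields continuity and piecewise linearity at once. The remaining point is that the only kinks of $cl$ lie over $I$ or over $\partial\Delta^{|\Theta|}$. To see this I would argue that every facet of the upper hull is either a sub-polytope of $G_0$/$G_1$ or a ``bridging'' facet spanned across $I$; where a bridging affine piece $A$ meets $G_0$ it satisfies $A - L_0 \ge 0$ on $R_0$ with equality on the shared edge, and an affine function that is nonnegative on $R_0$ and vanishes on a subset must vanish on a \emph{face} of $R_0$. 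Since the faces of $R_0$ lie on $I$ or on $\partial\Delta^{|\Theta|}$, no kink can appear in the relative interior of $R_0$ (or $R_1$), establishing that $cl$ is smooth except possibly over $I$.

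Next I would characterize the induced posteriors. By Proposition~\ref{theorem:concave_closure} the optimal value is $cl(\muhat)$, and Bayes plausibility gives $\sum_s P(s)\rhohat^s = \muhat$ with $\sum_s P(s)\bar{\up}(\rhohat^s) = cl(\muhat)$. Combining concavity of $cl$ with $cl \ge \bar{\up}$ gives the chain $cl(\muhat) = \sum_s P(s)\bar{\up}(\rhohat^s) \le \sum_s P(s)\,cl(\rhohat^s) \le cl\big(\sum_s P(s)\rhohat^s\big) = cl(\muhat)$, forcing both inequalities to equalities. The first equality yields $cl(\rhohat^s) = \bar{\up}(\rhohat^s)$ for every $s$ with $P(s)>0$, which is the final assertion; the second (equality in Jensen for a concave function) forces $cl$ to be affine on a single piece $F$ containing $\mathrm{conv}\{\rhohat^s\}$ and hence $\muhat$.

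Finally I would locate the $\rhohat^s$. By Lemma~\ref{lemma:revelation_principle} it suffices to consider $|\Ss| = |\A| = 2$ posteriors, so $\muhat$ lies on a segment in $F$ with endpoints $\rhohat^0,\rhohat^1$, and more generally the $\rhohat^s$ may be taken at the vertices of $F$. Because the vertices of an affine piece of $cl$ are either kinks of $cl$ or points of $\partial\Delta^{|\Theta|}$, and kinks live only over $I$ by the previous paragraph, each $\rhohat^s$ lies on $I$ or on the simplex boundary. I would reinforce this with a local check ruling out interior posteriors directly: if some $\rhohat^{s}$ were interior to both $R_0$ and the simplex, then $g := cl - L_0$ is concave, nonnegative near $\rhohat^{s}$, and zero at the interior point $\rhohat^{s}$, so a one-dimensional concavity argument along lines forces $g \equiv 0$ on a neighbourhood; then $cl = \bar{\up} = L_0$ locally, and placing $\muhat$ in this flat region would make strict improvement over the prior impossible, contradicting strict optimality. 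The main obstacle is the middle step --- proving the concave closure introduces no new kinks in the interior of the action regions, i.e.\ pinning every relevant vertex to $I$ or $\partial\Delta^{|\Theta|}$; once that confinement is secured, the optimality and Bayes-plausibility bookkeeping is routine.
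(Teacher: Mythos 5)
Your Jensen chain is correct and actually cleaner than the paper's argument for the last assertion: the paper proves $cl(\rhohat^s)=\bar{\up}(\rhohat^s)$ by a segment-above-segment contradiction, whereas your three-line sandwich $cl(\muhat)=\sum_s P(s)\bar{\up}(\rhohat^s)\le\sum_s P(s)\,cl(\rhohat^s)\le cl(\muhat)$ gets it immediately. The gap is in the step everything else leans on: the claim that every kink of $cl$ lies over the indifference plane or $\partial\Delta^{|\Theta|}$ is false. Your facet analysis only treats ridges where a bridging facet meets $G_0$ or $G_1$; it ignores ridges where two bridging facets meet \emph{each other}, and those can project into the relative interior of an action region. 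Concretely, take three states with beliefs $(x,y,1-x-y)$, user indifference plane $x=1/2$ (so $R_0=\{x\le 1/2\}$), and platform utilities $L_0\equiv 0$, $L_1(x,y)=x+2y-\epsilon(1-x-y)$ for small $\epsilon\ge 0$ (the $\epsilon$ makes the paper's ``both prefer each action somewhere'' assumption hold strictly and does not change $cl$). One can check that $cl(x,y)=\min\left(x+2y,\,3x\right)$: two bridging facets meeting along the ridge $\{y=x,\ 0<x<1/2\}$, which lies in the interior of $R_0$, off the indifference plane and off the boundary. So ``vertices of an affine piece of $cl$ are kinks or boundary points, and kinks live only over the indifference plane'' is not a valid route to locating the posteriors (and, read as an exclusivity claim, the kink statement fails even under the paper's assumptions).

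Your backup local check fails at its last step, and for a reason that reveals the missing ingredient. From an interior posterior $\rhohat^s\in\mathrm{relint}(R_0)$ you correctly get $cl-L_0\equiv 0$ near $\rhohat^s$ (in fact on all of $R_0$, by the same one-dimensional argument), but you then assert $\muhat$ sits in this flat region; Bayes plausibility only places $\muhat$ in the convex hull of the posteriors, and it may lie in $R_1$, so no contradiction follows. What is missing is the paper's standing assumption that for each action there is a state where \emph{both} user and platform prefer it: the paper's own proof invokes Lemma 4 of Kamenica--Gentzkow and uses exactly this assumption to eliminate the third alternative of that lemma. Nowhere do you use this assumption, and without it the location claim is simply false: with two states, $L_0(\rhohat)=\rhohat$, $L_1(\rhohat)=2\rhohat$, indifference at $\rhohat=1/2$, and prior $1/4$, the scheme with posteriors $\{0,3/4\}$ is strictly optimal, yet $3/4$ is interior to $R_1$. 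With the assumption your local check can be repaired: writing any $z\in\mathrm{relint}(R_0)$ as $z=\lambda e_1+(1-\lambda)z'$ with $z'\in R_0$, where $e_1$ is the state at which both strictly prefer sharing, gives $cl(z)\ge \lambda L_1(e_1)+(1-\lambda)L_0(z')>L_0(z)$, contradicting $cl=L_0$ on $R_0$. So the correct structure is: keep your Jensen step (posteriors lie in the contact set $\{cl=\bar{\up}\}$), then show the contact set avoids the relative interiors of $R_0$ and $R_1$ using the preference assumption --- not via kink confinement.
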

\begin{proof}
    We observe that $\bar{\up}(\rhohat)$ is piece-wise linear due to the mapping from $\rhohat(\thetahat)$ to $\rho(\theta)$ being linear and expectation being a linear operator, with possible discontinuities at beliefs wherein the receiver is indifferent. Since $co(\rhohat)$ is the convex hull of this piecewise linear function defined for all $\rhohat$, and $cl(\hat{\mu})$ corresponds to the boundary of this convex hull, this must be continuous, piecewise linear and possibly non-differentiable over the indifference planes.
    
    Next, invoking lemma 4 in \citet{kamenica2011bayesian} implies that at any induced posterior $\rhohat^s$ for an optimal scheme, either (1) the belief is on the boundary, (2) the receiver must be indifferent to multiple actions at this belief, or (3) for any other belief wherein the receiver optimal action is not $a^*(\rhohat^s)$, it is strictly better for the platform that $a^*(\rhohat^s)$ is taken. Consider the posterior induced by signal 1, $\rhohat^1$, and suppose the user optimal action is $a$. Condition (3) implies that for all beliefs wherein the optimal receiver action is \emph{not share}, the platform would strictly prefer the share action. However, recall that in section \ref{sec:Model}, we assumed that for each action $a$, there is a state (and thus a corresponding belief) where the user and platform both prefer this action. Thus, only the first two can hold.

    Lastly, to show $cl(\rhohat^s) = \bar{\up}(\rhohat^s)$, consider the line segment $\ell_1$ connecting the induced posteriors of the optimal signaling scheme: $(\rhohat^0, \bar{\up}(\rhohat^0))$ and $(\rhohat^1, \bar{\up}(\rhohat^1))$. Observe that the optimal sender utility is obtained by evaluating this line segment at the noisy prior $\muhat$. If the line connecting these two points (including the end-points) is part of $cl(\rhohat)$, then our claim holds. If not, then this line must be in the interior of the convex hull since $cl(\rhohat)$ represents the boundary of this convex hull. Then, there exist points $(\rhohat'_0, \bar{\up}(\rhohat'_0))$ and $(\rhohat'_1, \bar{\up}(\rhohat'_1))$ such that the line between these two is strictly above $\ell_1$. Evaluating this line segment at the prior $\muhat$ (and thus satisfying Bayes plausibility) yields a strictly higher value than at $\ell_1$, contradicting our original claim that we start with an optimal signaling scheme. 

\end{proof}

As discussed in section \ref{section:prelim}, the harder scenario is when the user is purely popularity-driven and indifferent to misinformation since there is less alignment with the platform. Such users may indeed not even agree with the platform's characterization of misinformation. In these cases, we show below that while the platform can still signal conditioned on both $\thetahat = (\mhat, \vhat)$, it suffices to reveal the signaling over $\vhat$ to the user without loss of generality. This allows a nice operational simplicity.

\begin{prop}\label{prop:user_depends_validation}
    For a user whose utility only depends on the validation state, it suffices for the platform to reveal their marginal signaling scheme $\pi(s|\vhat)$ and $\QV$ to the user, for them to compute their true posterior $P(v|s) = \tfrac{1}{P(s)} \mu(v)\sum_{\vhat}{\pi(s|\vhat)\QV_{\vhat, v}}$.
\end{prop}
\begin{proof}
    Consider the platform revealing the marginal scheme $\pi(s|\vhat) = \sum_{\mhat}{\pi(s|\mhat, \vhat)}$. Since the signal only depends on $\vhat$, users can compute $P(s|v) = \sum_{\vhat}{P(s|v,\vhat)\QVij} = \sum_{\vhat}{\pi(s|\vhat)\QV_{\vhat, v}}$, and then compute the desired posterior over $v$  using the prior as follows: $P(v|s) = \tfrac{1}{P(s)}P(s|v)\mu(v)$.
\end{proof}

\section{Appendix B}\label{Appendix:B}

\subsection*{Proof of Theorem \ref{theorem:monotonicity}}

\begin{proof} 
    We first note that in our setting with binary actions, this optimal persuasion LP can be tersely expressed as follows, with $c$ and $B$ only depending on $\I$ (and not on $\QT$), and $\bm{\pi}$ and $\bm{\tilde{\pi}}$ denoting row vectors capturing the probability of sending signal 1 conditioned on noisy and true observations. 
     \begin{gather*}
         \text{maximize:} \langle c, \boldsymbol{\tilde \pi} \rangle \qquad \qquad \qquad \qquad \qquad \qquad \qquad \quad  \\
         \text{subject to: \,} \boldsymbol{\tilde \pi} B  \geq \bm{0} \text{\, and \,} \boldsymbol{\pi}\QT = \boldsymbol{\Tilde{\pi}} \text{\, and \,} \boldsymbol{0}\leq \boldsymbol{\pi} \leq \boldsymbol{1}
     \end{gather*}
     The set $\{\boldsymbol{\pi} \QT  \,\, | \,\, \boldsymbol{0}\leq \boldsymbol{\pi} \leq \boldsymbol{1}\}$ can be interpreted geometrically. This set corresponds to the \emph{parallelepiped} induced by the rows of $\QT$. In other words, for a set of vectors $\{\bm{v}_1, \dots,\bm{v}_k\}$, we define $paral(\bm{v}_1, \dots, \bm{v}_k)$ as the set of vectors that can be expressed as: $\beta_1 \bm{v}_1 + \dots + \beta_k \bm{v}_k$, with $\beta_i \in [0,1]$. One interpretation of this reformulated LP is that Bayesian persuasion with noisy observations can be seen as standard Bayesian persuasion with an additional constraint that the signaling scheme over the true observations (also referred to as effective signaling) $\boldsymbol{\tilde{\pi}}$ has to belong to the parallelepiped of the rows of the confusion matrix $\QT$. We now prove the stated theorem:

    $\Rightarrow$ We start by considering the sufficient condition for the optimal platform utility to increase wherein the rows of $\QT_1$ can be written as a convex combination of the rows of $\QT_2$. Denote the convex weights by row vectors
    $\boldsymbol{\lambda_1},\dots,\boldsymbol{\lambda_{|\Theta|}}$. Then the following holds: $L \QT_2 = \QT_1$, where $L$ is defined as $L = \begin{bmatrix}
        \boldsymbol{\lambda_1}\\
         \vdots \\
         \boldsymbol{\lambda_{|\Theta|}}\\
    \end{bmatrix}$.
    Notice that $L$ is a doubly-stochastic matrix. Specifically, observe that $\boldsymbol{1} L \QT_{2} = \boldsymbol{1} \QT_{1}=\boldsymbol{1}$, which implies $\boldsymbol{1} L = \boldsymbol{1}({\QT_2}^{-1})= \boldsymbol{1}$.

    Let $(\boldsymbol{\Tilde{\pi_1}}, \boldsymbol{\pi_1})$ denote any feasible solution for LP$(\QT_1)$. Specifically, $\boldsymbol{\Tilde{\pi_1}}$ is  a row vector whose coordinates are $\pi(1|\theta)$ and $\boldsymbol{\pi_1}$ is a row vector whose coordinates are $\Tilde{\pi}(1|\thetahat)$.  We will show that  $(\boldsymbol{\Tilde{\pi_2}}, \boldsymbol{\pi_2}) = (\boldsymbol{\Tilde{\pi_1}}, \boldsymbol{\pi_1}L)$ is a feasible solution for LP($\QT_2$). Notice that all constraints and objective of LP($\QT_2$) depend only on $\boldsymbol{\Tilde{\pi}}$, except following constraints: $\boldsymbol{\Tilde{\pi_2}} = \boldsymbol{\pi_2} \QT_2  \,\, \text{and} \,\, \boldsymbol{0}\leq \boldsymbol{\pi_2} \leq \boldsymbol{1}$. Note the first constraint is satisfied by $\boldsymbol{\pi_2} \QT_2  = \boldsymbol{\pi_1}L \QT_2  = \boldsymbol{\pi_1}\QT_1  = \boldsymbol{\Tilde{\pi_1}} = \boldsymbol{\Tilde{\pi_2}}$, while second constraint is 
\begin{equation}
    \boldsymbol{\pi_2} = \boldsymbol{\pi_1}L =\begin{pmatrix}
\langle  \boldsymbol{\pi_1}, [L]_{1:}  \rangle \\
\vdots\\
\langle
\boldsymbol{\pi_1}, [L]_{|\Theta|:}  \rangle
\end{pmatrix} 
\end{equation}
satisfied by the fact that $L$ is double stochastic matrix. 
Now, we can conclude that for optimal solution 
$(\boldsymbol{\Tilde{\pi_1}^{*}}, \boldsymbol{\pi_1}^{*})$ for LP$(\QT_1)$, we have that solution 
$(\boldsymbol{\Tilde{\pi_2}^{*}}, \boldsymbol{\pi_2}^{*}) = (\boldsymbol{\Tilde{\pi_1}^{*}}, \boldsymbol{\pi_1}^{*}L)$ is feasible solution for LP$(\QT_2)$ and that optimal platform utility for LP$(\QT_1)$ is always achievable when solving LP$(\QT_2)$. 
\newline
\newline
$\Leftarrow$
We now show the stated condition is necessary. That is, if the condition is not satisfied, there always exist instances wherein the utility is not monotone.  For symmetric confusion matrices, we first show that the rows of $[\QT_1]_{i:}$ belong to 
\begin{equation}
paral([\QT_2]_{1:},\dots ,[\QT_2]_{|\Theta|:}):= \{
     \beta_1[\QT_2]_{1:} + \dots + \beta_{|\Theta|}[\QT_2]_{|\Theta|:} \;:\; \beta_k \in [0,1] \}
\end{equation} also belong to $[\QT_1]_{i:} \in  co([\QT_2]_{1:}),\dots ,[\QT_2]_{|\Theta|:})$. Observe that $\sum_{j}{[\QT_1]_{i:j}} = 1$ and further, $\sum_{j}\sum_{\ell}{\beta_{\ell}{[\QT_2]_{\ell:j}}} = \sum_{\ell}\beta_{\ell}\sum_{j}{{[\QT_2]_{\ell:j}}} = \sum_{\ell}{\beta_\ell} = 1$, which is exactly the convex hull condition. Thus, the convex hull structure is equivalent to the parallelepiped structure when $\QT_1$ and $\QT_2$ are symmetric stochastic matrices.

Now, assume there exists a column of matrix $\QT_1$, such that $[\QT_1]_{i:} \notin$  $paral([\QT_2]_{1:},\dots ,[\QT_2]_{|\Theta|:})$ then we aim to show that there exists an instance $\I = (\up, \ur, \mu)$ such that $\up^*_\I(\QT_1) > \up^*_\I(\QT_2)$, violating the monotone condition. Consider an instance wherein the receiver is indifferent to both actions at all states. Thus, the first two constraints of LP$(\QT)$ are always satisfied.

Then for $\QT_1$, let $\phi_1$ denote the set of any $\bm{\tilde{\pi}}$ such that $\bm{\pi} \QT_1 = \bm{\tilde{\pi}}$, with $\bm{0} < \bm{\pi} < \bm{1}$, define $\phi_2$ similarly for $\QT_2$. Since there exists a column of matrix $\QT_1$, such that $[\QT_1]_{i:} \notin$  $paral([\QT_2]_{1:},\dots ,[\QT_2]_{|\Theta|:})$ we know that there exists a point, denoted by $\boldsymbol{\Tilde{\Tilde{\pi}}_1}$ which belongs to $paral([\QT_1]_{1:},\dots ,[\QT_1]_{|\Theta|:})$, but not $paral ([\QT_2]_{1:},\dots ,[\QT_2]_{|\Theta|:})$. Since $\boldsymbol{\Tilde{\Tilde{\pi}}_1} \in paral ([\QT_1]_{1:},\dots ,[\QT_1]_{|\Theta|:})$ we know that $\exists \boldsymbol{\pi_1}$ whose values are between $\bm{0}$ and $\bm{1}$ such that $\boldsymbol{\pi_1} \QT_1 = \boldsymbol{\Tilde{\Tilde{\pi}}_1}$. On contrary, since $\boldsymbol{\Tilde{\Tilde{\pi}}_1} \notin paral([\QT_2]_{1:},\dots ,[\QT_2]_{|\Theta|:})$ we know that $\nexists \boldsymbol{\pi_2}$ whose values are between 0 and 1 such that $\boldsymbol{\pi_2} \QT_2 = \boldsymbol{\Tilde{\Tilde{\pi}}_1}$. Therefore, we know that $\exists\boldsymbol{\Tilde{\Tilde{\pi}}_1}$ such that $\boldsymbol{\Tilde{\Tilde{\pi}}_1} \in \phi_1$ and $\boldsymbol{\Tilde{\Tilde{\pi}}_1} \notin \phi_2$. Now, by Hyperplane Separation Theorem~\citep[Exercise 2.22]{boyd2004convex}, we know that $\exists \boldsymbol{b}$ such that $\langle \boldsymbol{b} , \Tilde{\boldsymbol{\Tilde{\pi_1}}}\rangle > c_1$ and $\langle \boldsymbol{b} , \boldsymbol{\Tilde{\pi_2}}\rangle < c_2$, $\forall \boldsymbol{\Tilde{\pi}_2} \in \Phi_2$ such that $c_1 > c_2$.

Now notice that we can simply achieve our objective expression by $(\up(a_1,\theta)-\up(a_2,\theta))\mu(\theta) = b_\theta$ where $\theta = 1 \dots |\Theta|$, and $b_\theta$ is the $\theta$-th coordinate of vector $\boldsymbol{b}$. Notice that in the edge case, where $\mu(\theta) = 0$ for some $\theta$ values the initial LP problem reduces to the same LP structure with a smaller dimension, and the proof is exactly the same.
Therefore, we see that we can find examples where $\up^*_\I(\QT_1) > \up^*_\I(\QT_2)$.
\end{proof}

\subsection*{Proof of Theorem \ref{theorem:continuity}}
\begin{proof}
    We wish to show that for any instance $\I$ and any pair $(\QT_1, \QT_2)$, there exists a constant $L$ such that: $|\up^*_\I(\QT_1) - \up^*_\I(\QT_2)| \leq L||\QT_1 - \QT_2||_{\infty}$, where we use the $\ell_{\infty}$ matrix norm. We first show that the concave closure function $cl(\rhohat)$ is Lipschitz in $\rhohat$. Next, we show that the change from $\QT_1$ to $\QT_2$ leads to a bounded vertical shift in the closure function due to this Lipschitz property. Lastly, the point at which we evaluate the concave closure function to determine the optimal sender utility also changes in a bounded manner. The combined effects are all bounded and give rise to our Lipschitz constant. 

    From lemma \ref{lemma:indifference}, we know that on either side of the plane of indifference, $cl(\rhohat)$ is a continuous linear function. Since we have 2 actions, there is a single plane of indifference. We will aim to tightly upper bound the directional derivative for each joint state $\thetahat=(\mhat,\vhat)$ of the linear regions on either side of this indifference plane. Since the concave closure function is linear on either side of the indifference plane, 
    it suffices to consider the maximum value of $\frac{\Delta u}{\Delta \rhohat_{\thetahat}}$ from the indifference plane, where $\rhohat_{\thetahat}$ denotes the coordinate corresponding to $\thetahat$. Pick an arbitrary belief $\rhohat$ on the indifference plane. The slope along a direction $\thetahat$ is naturally upper bounded by $\frac{u_s^{max} - u_s^{min}}{\min(\rhohat_{\thetahat}, 1-\rhohat_{\thetahat})}$. We now look to tighten this. Indeed, if there exists another belief $\rhohat'$ on the indifference plane such that $\rhohat'_{\thetahat}$ is closer to 0.5, this would imply the change in utility is larger than $\up^{max} - \up^{min}$, which is not possible. To tighten this, let $\rhohat_{\thetahat}^{min}$ and $\rhohat_{\thetahat}^{max}$ denote the smallest and largest values of coordinate $\thetahat$ for beliefs $\rhohat$ that lie on the indifference plane. Then the largest value of the directional derivative of the closure function along the $\thetahat$ direction is given by $c_{\thetahat} = \max \left( \frac{\up^{max} - \up^{min}}{1 - \rhohat_{\thetahat}^{min}}, \frac{\up^{max} - \up^{min}}{\rhohat_{\thetahat}^{max}} \right)$. Thus, when belief changes from $\rhohat_1$ to $\rhohat_2$, the maximum change in the concave closure function is upper-bounded by $|\rhohat_1 - \rhohat_2|\sum_{\thetahat}{c_{\thetahat}}$, with $c = \sum_{\thetahat}{c_{\thetahat}}$ being the Lipschitz constant.

    For a belief over predicted states $P(\thetahat) = \rhohat$, the corresponding belief over true states $\rho = P(\theta)$ changes under the different confusion matrices. This affects the closure graph in two ways. First, it vertically shifts the underlying $\bar{\up}$ function since $\bar{\up}(\rhohat) = \E_{\rho(\theta)}[\up(a^*, m, v)]$. More formally, for a given belief $\rhohat$ the change due to the shift from $\QT_1$ to $\QT_2$ is given by: $\bar{\up}(\rhohat; \QT_1) - \bar{\up}(\rhohat; \QT_2)$ 
    \begin{equation*}
        = \sum_{\theta}\up(a^*, \theta)\sum_{\thetahat}{\rhohat(\thetahat)(\VT_1(\theta, \thetahat) - \VT_2(\thetahat, \theta))} \leq |\Theta|u^{max}||\VT_1 - \VT_2||
    \end{equation*}
    which follows due to lemma \ref{lemma:noisy_to_true_scheme}. Next, exploiting the relationship between matrices $\VT$ and $\QT$ as outlined in Lemma \ref{lemma:observed_to_true}, the following holds for any element $(\theta, \thetahat)$ of matrix $\VT$:
    \begin{equation}
        \VT_{\theta, \thetahat} = \frac{\mu(\theta)\QTij}{\sum_{\theta'}{\mu(\theta')\QT_{\thetahat, \theta'}}} \implies \left|\frac{\partial \VT_{\theta, \thetahat}}{\partial \QT_{\thetahat, \theta_i}}\right| \leq \frac{\mu(\theta_i)\mu(\theta)\QT_{\thetahat, \theta}}{\left(\sum_{\theta'}{\mu(\theta')\QT_{\thetahat, \theta'}}\right)^2} \leq \frac{|\Theta|^2}{\mu^2_{min}}
    \end{equation}
    where the last inequality follows since we consider classifiers to at at least as good as chance, the diagonals of $\QT$ are always at least $\tfrac{1}{|\Theta|}$. We also note that $\mu_{min} > 0$ since if any state occurs with 0 probability, we can without loss of generality, reformulate the problem to exclude that state. With the dependence of an element of $\VT$ established to be Lipschitz in an element of $\QT$, it follows that $||\VT_1 - \VT_2|| \leq M ||\QT_1 - \QT_2||$ where $M = \tfrac{|\Theta|^3}{\mu^2_{min}}$. 
    
    
    
    Second, let $\rho(\theta)$ denote a belief wherein the user has equal expected utility for both actions. In other words, the user is indifferent. The predicted state belief that corresponds to this true belief also shifts due to the change from $\QT_1$ to $\QT_2$. In other words, the indifference plane is laterally shifted. To precisely characterize this, observe that for a belief over true state $\rho(\theta)$, the change in corresponding predicted belief is: $\rhohat(\thetahat; \QT_1) - \rhohat(\thetahat; \QT_2) \leq |\Theta| ||\QT_1 - \QT_2||$ which also follows form lemma \ref{lemma:noisy_to_true_scheme}. There is thus a region of length $|\Theta| ||\QT_1 - \QT_2||$ where the optimal action is different under $cl(\rhohat; \QT_1)$ and, $cl(\rhohat; \QT_2)$. The largest difference between the two closure functions for a belief $\rhohat$ occurs in this region since one function could be increasing and the other decreasing (for a belief outside of this region, the optimal action is the same thus, both functions are both increasing or decreasing). However, since the closure function is lipschitz with constant $c$, the maximum difference at any belief $\rhohat$ is given by: $|cl(\rhohat;\QT_1) - cl(\rhohat;\QT_1)| \leq 2c|\Theta| ||\QT_1 - \QT_2|| + |\Theta|M u^{max}||\QT_1 - \QT_2||$. 
  
    Lastly, recall that by theorem \ref{theorem:concave_closure}, the optimal platform utility is equal to evaluating the closure function at the noisy prior $\muhat(\thetahat)$. For a prior $\mu(\theta)$, we have established that the predicted belief shifts at most $|\Theta|||\QT_1 - \QT_2||_1$. Since the closure function is Lipschitz, the change in the evaluation point from $\muhat(\thetahat\; \QT_1)$ to $\muhat(\thetahat\; \QT_2)$ leads to a difference of at most $c|\Theta| ||\QT_1 - \QT_2||_1$. Combining this with the fact that the closure functions are vertically by at most $(2c + M u^{max})|\Theta|||\QT_1 - \QT_2||_1$ at any belief, implies that the total change in optimal utility due to change from $\QT_1$ to $\QT_2$ is at most $|\Theta|(3c + M u^{max})||\QT_1 - \QT_2||_1$. Thus, the optimal sender utility for an instance $\I$ as a function of confusion matrix $\QT$ is $|\Theta|(3c + M u^{max})$-Lipschtiz. 
\end{proof}

\section{Appendix C}\label{appendix:C}
\subsection*{Proof of Theorem \ref{theorem:performative_convergence}}
\begin{proof}
    We first express the performative dynamics with respect to the observed states $\thetahat = (\mhat, \vhat)$. Observe the following: $\lambda \muhat_t(\thetahat) + (1-\lambda)\rhohat_t(\thetahat | s=1) = \sum_{\theta}{\QTij [\lambda \mu_t(\theta) + (1-\lambda)\rho_t(\theta|s=1)]}$, which follows due to lemma \ref{lemma:observed_to_true}. Next, by invoking the performative dynamics and lemma \ref{lemma:observed_to_true} again, we have that this is equivalent to $\sum_{\theta}{\mu_{t+1}(\theta)\QTij} = \muhat_{t+1}(\thetahat)$. In other words: $\muhat_{t+1}(\thetahat) = \lambda \muhat_t(\thetahat) + (1-\lambda)\rhohat_t(\thetahat|s=1)$. \footnote{}
    
    Next, observe that the performative process only changes the prior (and thus the optimal signaling scheme), but does not affect the platform belief to expected utility function $\bar{\up}(\rhohat)$ nor its concave closure $cl(\rhohat)$. Both of these depend only on the platform and user utility. Next, consider the scenario at $t=0$ with noisy prior $\muhat_0(\thetahat)$, whereupon the platform commits to an optimal signaling scheme $\pi^*_0(s|\thetahat)$. If there are multiple optimal schemes (thus multiple optimal posteriors that can be induced) and since this is the first round, let the platform break the tie by choosing the pair for whom the corresponding $\bar{\up}(\rhohat_0(\thetahat|s=1))$ is the largest\footnote{If multiple optimal signaling schemes exist at subsequent rounds, we assume ties are broken by choosing the scheme whose posteriors are closest to the earlier round's posterior.}. Denote this pair of optimal signaling schemes by $\{\rhohat^0_0, \rhohat^1_0\}$. Next, for $\alpha \in [0,1]$, consider the line segment $\ell(\alpha) = (1 - \alpha)z_0 + \alpha z_1$ which connects the points $z_0 = [\rhohat^0_0, cl(\rhohat^0_0)]$ and $z_1 = [\rhohat^1_0, cl(\rhohat^1_0)]$. By lemma \ref{lemma:indifference}, we know that the endpoints of this line segment, corresponding to the optimal induced posteriors, also lie on the $\bar{\up}(\rhohat)$. In other words, $[\rhohat^x_0, cl(\rhohat^x_0)] = [\rhohat^x_0, \bar{\up}(\rhohat^x_0)]$ for $x \in \{0,1\}$. By Bayes plausibility, there exists an $\alpha_0$ such that the first element of $\ell(\alpha_0)$, denoted by $\ell(\alpha_0)_0 = \muhat_0$. Theorem \ref{theorem:concave_closure} further states that the expected platform utility under the optimal scheme is given by $cl(\muhat_0) = \sum_{x \in {0,1}}{P_0(x)\bar{\up}(\rhohat^x_0)}$, where $P_0(x)$ is the probability of signal $x$ at round 0. Observe that this point is on the line segment $\ell$ -- i.e. $\ell(\alpha_0) = [\muhat_0, cl(\muhat_0)]$. Thus, we have that the line segment $\ell$ matches the $cl(\rhohat)$ at both the endpoints and one interior point. Since the $cl(\rhohat)$ function is concave piece-wise linear continuous, it must imply $cl(\rhohat)$ coincides with the line segment $\ell$ between beliefs $\rhohat^0_0$ and $\rhohat^1_0$. 

    The performative dynamic implies the next prior, $\muhat_1$ is a convex combination of $\muhat_0$ and $\rhohat^1_0$. Thus, there exists an $\alpha_1$ such that $\ell(\alpha_1)_0 = \muhat_1$, since the performative process moves to a point between the earlier prior and the $s=1$ posterior, both of which are on the $\ell$ line segment. More specifically, $\alpha_1 = (1 - \lambda) + \lambda \alpha_0$. The value of optimal signaling at $\muhat_1$ is again equivalent to $cl(\muhat_1)$ by theorem \ref{theorem:concave_closure}. Since the curve $\ell(\cdot)$ coincides with $cl(\cdot)$ between $\rhohat^0_0$ and $\rhohat^1_0$, and $\muhat_1$ lies in this region, the value of optimal signaling is equal to $\ell(\alpha_1)_1 = cl(\muhat_1)$. Thus, $\rhohat^0_0$ and $\rhohat^1_0$ are still optimal induced posteriors for prior at $\muhat_1$, and recall we break ties by choosing posteriors closest to the last round. In other words, $\rhohat^x_1 = \rhohat^x_0$, with $P_1(s=0) = 1 - \alpha_1$ and $P_1(s=1) = \alpha_1$ is an optimal scheme for the platform at $\muhat_1$. It is evident that this invariant will be maintained throughout this process. Precisely, for each $\mu_t$ in this process, there exists a $\alpha_t \in [0,1]$ such that $\ell(\alpha_t)_0 = \muhat_t$, which implies $cl(\mu_t) = \ell(\alpha_t)_1$, which implies signaling such that $\rhohat_t^x = \rhohat^x_0$, and $P_t(s=0) = 1 - \alpha_t$ and $P_1(s=1) = \alpha_t$ is optimal for the platform at $\mu_t$. This in turn means $\mu_{t+1}$ can be expressed as $\ell(\alpha_{t+1})_0$, with $\alpha_{t+1} = (1-\lambda) + \lambda \alpha_t$. Since $\alpha_t, \lambda \in [0,1]$, the following always holds: $\alpha_t < \alpha_{t+1} < 1$; thus, as $t \rightarrow \infty$, $\alpha_t \rightarrow 1$ and $\muhat_t \rightarrow \rhohat^1_0$.
\end{proof}

\subsection*{Proof of Theorem \ref{theorem:performative_stability}}
\begin{proof}
    Observe that any belief $\rhohat \in \Delta^{|\Theta|}$ can be equivalently represented as a vector in ${|\Theta|-1}$ dimensional space. As such, hereinafter, we will refer to $\rhohat \in \reals^{|\Theta|-1}$. Recall that $w_a(\rhohat)$ is a linear function representing the expected user utility at belief $\rhohat$ for taking action $a$, and the plane of indifference, which we denote by $\D$ is the intersection of these two planes, with $dim(\D) \leq |\Theta|-1$. If $w_0(\rhohat)$ and $w_1(\rhohat)$ are co-planer (thus like $\rhohat$, $dim(\D) = |\Theta|-1$), the user is indifferent at \emph{all} beliefs $\rhohat$. Since ties break in favour of the platform, $\bar{\up}(\rhohat) = max(\up_0(\rhohat), \up_1(\rhohat))$, where $\up_a(\rhohat)$ denotes the expected platform utility when action $a$ is taken at belief $\rhohat$. Since the mapping from $P(\thetahat)$ to $P(\theta)$ is continuous (lemma \ref{lemma:noisy_to_true_scheme}) and expectation is linear, $\up(\rhohat, x)$ is linear and $\bar{\up}(\rhohat)$ is continuous and piecewise linear. Further, since this is a max over two linear functions, $\bar{\up}(\rhohat)$ is either convex or concave everywhere. As observed by \citet{kamenica2011bayesian}, when this function is concave, signaling cannot strictly improve utility, and when convex, the optimal induced posterior is at the simplex boundary. Note that when signaling cannot improve utility, uninformative signaling that does not change the prior is optimal, and we reach a stable point. We now prove our claim inductively. 
     
    We start with the base case of $|\Theta| = 2$ with $\rhohat \in [0,1]$. If the user is indifferent everywhere, the above implies that either signaling cannot improve utility and we are at a stable point or signaling moves the share posterior to the simplex boundary, addressed below. Otherwise, $\D$ is simply a point on this interval. Starting at any noisy prior $\muhat_t$, the first optimal signaling step will result in induced posteriors either at point $\D$, or on the boundary $[0,1]$ due to lemma \ref{lemma:indifference}. Note that since $\lambda = 0$, the prior at the next round is the ``share'' induced posterior, $\rhohat^1_0$. Thus, the next round's prior will be either on the boundary (i.e. $0$ or $1$) or at $\D$. If the prior is at the boundary, observe that there exists no signaling scheme that can strictly improve upon the utility at the prior, since there is no convex combination of distinct posteriors that can equal the prior. Thus, we have stability at round $t+1$. If the next round prior is at $\D$, optimal signaling at round $t+1$ must (due to lemma \ref{lemma:indifference}) either stay at $\D$ (and thus achieve convergence), or move the boundary and achieve stability in the next round. Thus, when $|\Theta| = 2$, a stable point is achieved in at most 2 steps.

    Our inductive hypothesis is the following: For a state size $|\Theta| = n \geq 2$, the performative process converges in at most $2(|\Theta|-1)$ in the $\lambda = 0$ regime. Now consider an instance with $|\Theta| = n + 1$, and accordingly, $dim(\D) \leq n$. Suppose we are at time $t$ with prior $\muhat_t$. If $dim(\D) = n$, the user is indifferent everywhere, and we know that either signaling cannot improve utility and we have reached a stable point, or signaling will move the share posterior (and thus $\mu_{t+1}$) to the simplex boundary, a case addressed below. Otherwise, $dim(\D) \leq n-1$, and after the first optimal signaling, lemma \ref{lemma:indifference} established that the ``share'' posterior, $\rhohat^1_t$ (and thus the next round prior $\muhat_{t+1}$) must lie either on $\D$ or on the simplex boundary. Consider the latter scenario. A simplex boundary is defined by a set of states $\mathcal{B}$ such that for each $\thetahat \in \mathcal{B}$, $\rhohat(\thetahat) \in \{0,1\}$. Observe that once the prior is on the boundary, any signaling thereafter must also lie on the same boundary. Otherwise, there is an induced posterior for which a $\thetahat' \in \mathcal{B}$ satisfies $\rhohat_t(\thetahat'|s) \notin \{0,1\}$ and no convex combination with this posterior can lead to the prior which has $\muhat_{t+1}(\thetahat') \in \{0,1\}$. Observe also that the simplex boundary is itself a lower dimensional simplex. In other words, if optimal signaling moves the next round's prior to be on the boundary, the problem reduces to a problem over a smaller state-space of size at most $|\Theta| - 1 = n$, for which we can appeal to the inductive hypothesis. For the other scenario where the prior at round $t+1$ lies on the surface $\D$, we note that the induced posteriors of strictly optimal signaling at round $t+1$ must now be two points that lie on either $\D$ or at the boundary. As we have already addressed the boundary case, observe if the share posterior is on $\D$ and $dim(\D) = 0$, if the next round ($t+2$) can strictly improve utility, the posteriors must be on the boundary. If $dim(\D) \in [1,n-1]$, then both posteriors are either on (1) $\D$, or (2) on the boundary. For (1), observe that if one posterior lies in $\D$, the other must also since their linear combination must be $\muhat_{t+1} \in \D$. Since $\D$ has dimension $\leq n-1$ and all induced posteriors thereafter must stay on this lower dimensional plane, we can reformulate this as a lower dimensional problem and appeal to the induction hypothesis. Thus in at most 2 steps, we arrive at a lower dimensional problem. Thus, when the number of states is $|\Theta| = n+1$, we require at most $2 + 2(n-1) = 2n = 2(|\Theta| -1)$ iterations, confirming the induction hynothesis. 
\end{proof}

\subsection*{Proof of Theorem \ref{theorem:performative_monotonic}}
\begin{proof}
    First, note that scaling the platform utility by an additive constant does not affect optimal signaling, and thus we can solve the normalized instance. 
    We will show that the $\lambda = 0$ process under the stated conditions always leads to strictly increasing platform utility until reaching stability. Consider an arbitrary round $t \geq 0$, with the optimal user action being $a_{t} \in \{0,1\}$ for prior $\mu_t$. If we are not at a stable point, platform utility can always improve due to signaling. Thus:
    \begin{equation*}
         \E_{\mu_t}[\up'(a_t, \theta)] < P_t(s=0)\E_{\rho^0_t}[\up'(0, \theta)] + P_t(s=1)\E_{\rho^1_t}[\up'(1, \theta)]
    \end{equation*}
    where we use the fact that optimal signaling is persuasive. Next, inductively assume that the user's best action at belief $\mu_t$ is 1 (to share) which yields positive platform utility (we will see why this always holds). We know this is true for the first round by the theorem condition and the fact that user default action at $\mu_0$ is assumed to be share. Since platform utility for action $0$ (not share) is always less than or equal to 0 for any state under the theorem conditions: $\E_{\mu_t}[\up'(a_t, \theta)] < P_t(s=1)\E_{\rho_t^1}[\up'(1, \theta)]$, which implies
    \begin{equation*}\label{equation:performative_monotonic}
        \frac{1}{P_t(s=1)}\E_{\mu_t}[\up'(a_t, \theta)] < \E_{\rho_t^1}[\up'(1, \theta)] = \E_{\mu_{t+1}}[\up'(1, \theta)]
    \end{equation*} 
    Lastly, we note that the next round's prior is equal to the posterior induced by the share/1 signal. Since signaling is always persuasive, the optimal action of the receiver at belief $\rho_t^1 = \mu_{t+1}$ (wherein they receive the share signal) is to share, which has positive utility for the platform by the above. Thus the assumption we made earlier always holds for any round $t+1$. Note that since $P_t(s=1) \in [0,1]$, $\E_{\mu_t}[\up'(a_t, \theta)] < \E_{\mu_{t+1}}[\up'(a_{t+1}, \theta)] = \E_{\mu_{t+1}}[\up'(1, \theta)]$, and this process monotonically converges to a stable point. 
\end{proof}

\end{document}